\newif\ifprocs
\renewcommand{\paragraph}{\subparagraph}
\newtheorem{question}[@theorem]{Question}
\newtheorem{conjecture}[@theorem]{Conjecture}
\newtheorem{remark}[@theorem]{Remark}
\newtheorem{claim}[@theorem]{Claim}
\newtheorem{lemma}[@theorem]{Lemma}
\newtheorem{corollary}[@theorem]{Corollary}
\newtheorem{fact}[@theorem]{Fact}
\newtheorem{theorem}{Theorem}[section]
\newtheorem{lemma}[theorem]{Lemma}
\newtheorem{corollary}[theorem]{Corollary}
\newtheorem{fact}[theorem]{Fact}
\newtheorem{remark}[theorem]{Remark}
\newcommand{\eqdef}{:=}
\DeclareMathOperator{\supp}{supp}
\DeclareMathOperator{\dist}{str}
\DeclareMathOperator*{\E}{{\mathbb E}}
\let\Pr\relax
\DeclareMathOperator*{\Pr}{\mathbb{P}}
\newcommand{\Terminals}{\mathsf{T}}
\newcommand{\Term}{\Terminals}
\newcommand{\Trees}{\mathsf{Trees}}
\newcommand{\gap}{\mathsf{gap}}
\newcommand{\conetrms}{c_1^+}
\DeclareMathOperator{\capa}{\mathsf{cap}}
\DeclareMathOperator{\dem}{\mathsf{dem}}
\DeclareMathOperator{\mcf}{\mathrm{mcf}}
\newcommand{\f}{\varphi}
\newcommand{\id}{\mathsf{id}}
\newcommand{\diam}{\mathrm{diam}}
\newcommand{\dil}{\mathrm{dil}}
\newcommand{\N}{\mathbb N}
\newcommand{\R}{\mathbb R}
\newcommand{\cF}{\mathcal{F}}
\newcommand\norm[1]{\left\lVert#1\right\rVert_1}
\title{Flow-Cut Gaps and Face Covers in Planar Graphs}
\author{Robert Krauthgamer%
\thanks{Weizmann Institute of Science, Israel.
Work partially supported by ONR Award N00014-18-1-2364, the Israel Science Foundation grant \#1086/18, and a Minerva Foundation grant.
Email: \texttt{robert.krauthgamer@weizmann.ac.il}
}
\footnote{Part of this work was done while the author was visiting the Simons Institute for the Theory of Computing.
}
\\
Weizmann Institute
\and
James R. Lee%
\thanks{University of Washington, Seattle, WA.
   Work supported by NSF
   grants CCF-1616297 and CCF-1407779 and a Simons Investigator Award.
Email: \texttt{jrl@cs.washington.edu}
}
\footnotemark[2]
\\
University of Washington
\and
Havana (Inbal) Rika%
\thanks{Weizmann Institute of Science, Israel.
Email: \texttt{havana.rika@weizmann.ac.il}
}
\\
Weizmann Institute
}
\begin{document}
\date{}
\maketitle

\begin{abstract}
The relationship between the
sparsest cut and the maximum concurrent multi-flow
in graphs has been studied extensively.
For general graphs, the worst-case gap between
these two quantities is now settled:  When there are $k$ terminal pairs,
the flow-cut gap is $O(\log k)$, and this is tight.
But when topological restrictions are placed on the flow
network, the situation is far less clear.
In particular, it has been conjectured that
the flow-cut gap in planar networks is $O(1)$, while
the known bounds place the gap somewhere between $2$ (Lee and Raghavendra, 2003) and $O(\sqrt{\log k})$ (Rao, 1999).

A seminal result of Okamura and Seymour (1981) shows that when all the terminals
of a planar network lie on a single face, the flow-cut gap is exactly $1$.
This setting can be generalized by considering planar networks where
the terminals lie on one of $\gamma > 1$ faces in some fixed planar drawing.
Lee and Sidiropoulos (2009) proved that the flow-cut gap is bounded
by a function of $\gamma$, and Chekuri, Shepherd, and Weibel (2013) showed
that the gap is at most $3 \gamma$.
We significantly improve
these asymptotics by establishing that the flow-cut gap is $O(\log \gamma)$.
This is achieved by showing that the edge-weighted shortest-path metric induced
on the terminals admits a stochastic embedding into trees with distortion $O(\log \gamma)$.
The latter result is tight, e.g., for a square planar lattice on $\Theta(\gamma)$ vertices.

The preceding results refer to the setting of {\em edge-capacitated} networks.
For {\em vertex-capacitated networks,} it can be significantly more challenging
to control flow-cut gaps.
While there is no exact vertex-capacitated version of the Okamura-Seymour Theorem,
an approximate version holds;
Lee, Mendel, and Moharrami (2015) showed that
the vertex-capacitated flow-cut gap is $O(1)$ on planar networks whose
terminals lie on a single face.
We prove that the flow-cut gap is $O(\gamma)$ for vertex-capacitated
instances when the
terminals lie on at most $\gamma$ faces.
In fact, this result holds in the more
general setting of submodular vertex capacities.
\end{abstract}

\ifprocs
\else
\newpage
{\small
\begingroup
\hypersetup{linktocpage=false}
\setcounter{tocdepth}{2}
\linespread{0.5}
\tableofcontents
\endgroup
}
\fi

\section{Introduction}
We present some new upper bounds on the gap between the concurrent flow and sparsest cut in planar graphs in
terms of the topology of the terminal set.
Our proof employs low-distortion metric embeddings into $\ell_1$,
which are known to have a tight connection to the flow-cut gap (see, e.g., \cite{LLR95,GNRS04}).
We now review the relevant terminology.

Consider an undirected graph $G$ equipped with nonnegative edge lengths $\ell : E(G) \to \R_+$ and a subset
$\Terminals = \Terminals(G) \subseteq V(G)$ of {\em terminal vertices.}
 We use $d_{G,\ell}$ to denote the shortest-path distance in $G$, where the length
of paths is computed using the edge lengths $\ell$.
We use $\conetrms(G,\ell;\Terminals)$ to denote the minimal number $D \geq 1$
for which there exists $1$-Lipschitz mapping $F : V(G) \to \ell_1$ such that
$F|_{\Terminals(G)}$ has bilipschitz distortion $D$.  In other words,
\ifprocs
\begin{align}
  \label{EQ: item 1 ell1 embedding}
  \forall u,v\in V(G):\quad
  &\norm{f(u)-f(v)}\leq d_{G,\ell}(u,v)\,,
  \\
  \label{EQ: item 2 ell1 embedding}
  \forall s,t\in \Terminals(G):\quad
  &\norm{f(s)-f(t)}\geq \tfrac{1}{D}\cdot d_{G,\ell}(s,t)\,.
\end{align}
\else
\begin{align}
  \label{EQ: item 1 ell1 embedding}
  \forall u,v\in V(G):\qquad
  &\norm{f(u)-f(v)}\leq d_{G,\ell}(u,v)\,,
  \\
  \label{EQ: item 2 ell1 embedding}
  \forall s,t\in \Terminals(G):\qquad
  &\norm{f(s)-f(t)}\geq \tfrac{1}{D}\cdot d_{G,\ell}(s,t)\,.
\end{align}
\fi

For an undirected graph $G$,
we define $\conetrms(G; \Terminals) \eqdef \sup_{\ell} \conetrms(G,\ell; \Terminals)$,
where $\ell$ ranges over all nonnegative lengths $\ell : E(G) \to \R_+$.
When $\Terminals=V(G)$, we may omit it and write
$\conetrms(G,\ell) \eqdef \conetrms(G,\ell;V(G))$
and $\conetrms(G) \eqdef \conetrms(G;V(G))$.
Finally, for a family $\mathcal{F}$ of finite graphs,
we denote $\conetrms(\mathcal{F}) \eqdef \sup \{ \conetrms(G) : G \in \mathcal{F} \}$,
and for $k \in \N$, we denote
\[
   \conetrms(\mathcal{F}; k) \eqdef \sup \left\{ \conetrms(G;\Terminals) : G \in \mathcal{F}, \Term \subseteq V(G), |\Term|=k \right\}\,.
\]

Let $\cF_{\mathrm{fin}}$ denote the family of all finite graphs,
and $\cF_{\mathrm{plan}}$ the family of all planar graphs.
It is known that $\conetrms(\cF_{\mathrm{fin}}; k)=\Theta(\log k)$~\cite{AR98,LLR95} for all $k \geq 1$.
For planar graphs, one has
$\conetrms(\cF_{\mathrm{plan}}; k) \leq O(\sqrt{\log k})$ \cite{Rao99} and $\conetrms(\cF_{\mathrm{plan}}) \geq 2$ \cite{LR10}.

Fix a plane graph $G$ (this is a planar graph $G$ together with a drawing in the plane).
For $\Term \subseteq V(G)$, we define the quantity $\gamma(G;\Term)$ to be the
smallest number of faces in $G$ that together cover all the vertices of $\Term$,
and $\gamma(G) \eqdef \gamma(G;V(G))$.

We say that the pair $(G,\Term)$ is an \emph{Okamura-Seymour instance},
or in short an \emph{OS-instance},
if it can be drawn in the plane with all its terminal on the same face,
i.e., if there is a planar representation for which $\gamma(G;\Term)=1$.
A seminal result of Okamura and Seymour \cite{OS81}
implies that $\conetrms(G;\Term)=1$ whenever $(G,\Term)$ is an OS-instance.

The methods of~\cite{LS09}
show that $\conetrms(G; \Term)\leq 2^{O(\gamma(G;\Term))}$,
and a more direct proof of~\cite[Theorem 4.13]{CSW13}
later showed that $\conetrms(G; \Term)\leq 3\gamma(G;\Term)$.
Our main result is the following improvement.

\begin{theorem}\label{THM: main result embedding}
   For every plane graph $G$ and terminal set $\Term \subseteq V(G)$,
      \[\conetrms(G;\Term) \leq O(\log \gamma(G;\Term)).\]
\end{theorem}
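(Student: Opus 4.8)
The plan is to reduce the general $\gamma$-face case to the single-face (Okamura–Seymour) case by a recursive, divide-and-conquer decomposition of the faces carrying the terminals, losing only an $O(1)$ factor per level of recursion and $O(\log\gamma)$ levels in total. Concretely, fix a plane graph $G$ with terminal set $\Term$ and let $\gamma=\gamma(G;\Term)$, with faces $f_1,\dots,f_\gamma$ covering $\Term$. The key structural move is: choose a shortest path $P$ in $(G,\ell)$ connecting two of these faces chosen to roughly balance the rest (say, so that at most $\lceil\gamma/2\rceil$ of the faces lie "on each side" of $P$), and cut the plane graph open along $P$. Cutting along a shortest path is the standard planar surgery (duplicate the vertices and edges of $P$), and it has two crucial properties. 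First, it merges the two endpoint faces with the outer structure so that the resulting instance has strictly fewer terminal-carrying faces on each piece — ideally we split $(G,\Term)$ into two plane subinstances each with at most $\lceil\gamma/2\rceil+O(1)$ terminal faces. Second, because $P$ is a shortest path, the distance distortion introduced by the cutting-and-separating step is controlled: a vertex's distance to the path, together with the arclength position along $P$, gives a $1$-Lipschitz "coordinate" that lets one glue embeddings of the two pieces back together. This is exactly the type of argument used for the $O(1)$ flow-cut gap of series-parallel graphs and for bounded-genus/one-face surgeries.

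The second ingredient is the base case: when $\gamma=1$, the Okamura–Seymour theorem (as quoted in the excerpt) gives $\conetrms(G;\Term)=1$, i.e. an isometric $\ell_1$ embedding of the terminal metric that is $1$-Lipschitz on all of $V(G)$. So the recursion bottoms out at distortion $1$. At each recursive step, we take the $1$-Lipschitz $\ell_1$ embeddings $F_1,F_2$ of the two pieces (with the duplicated copies of $P$ identified appropriately), and combine them into a single $1$-Lipschitz map on $V(G)$ into $\ell_1$ by taking a direct sum: use $F_1$ on one side, $F_2$ on the other, and add one extra real coordinate equal to (signed) distance to $P$, scaled so that the combined map stays $1$-Lipschitz across the path. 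For pairs of terminals on the same side, distortion is inherited from that side; for a pair $s,t$ separated by $P$, one shows $d_{G,\ell}(s,t)$ is approximated up to an $O(1)$ factor by $\|F_i(s)-x\|_1+\|F_j(t)-x\|_1$ for an appropriate image point $x$ on the cut, using that $P$ is shortest and that any $s$–$t$ geodesic either stays on one side or crosses $P$. Summing the $O(1)$ loss over $\lceil\log_2\gamma\rceil$ levels yields the $O(\log\gamma)$ bound; since the argument is uniform in $\ell$, it bounds $\conetrms(G;\Term)=\sup_\ell\conetrms(G,\ell;\Term)$.

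The main obstacle — and the place where the $O(1)$ per-level loss must be paid carefully rather than for free — is controlling the distortion for terminal pairs separated by the cut path $P$, and more subtly ensuring the face count genuinely halves. Two things can go wrong: (i) a balanced pair of faces to connect by $P$ need not exist if the faces are "laid out badly," so one likely needs a preliminary step (e.g. contracting a spanning structure on the faces, or working in the planar dual to find a balanced separating curve through few faces) to guarantee a split with at most $\approx\gamma/2$ terminal faces per side; (ii) when we cut open along $P$, terminals on $P$ itself get duplicated, and the two copies must be assigned consistent $\ell_1$ images so that distances to them from both sides agree up to $O(1)$ — this is where the extra "distance to $P$" coordinate and the shortest-path property of $P$ are essential, and where a naive gluing would blow up the distortion. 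I would handle (i) by an explicit balanced-separator argument on an auxiliary planar graph whose "vertices" are the $\gamma$ faces, and (ii) by the direct-sum-with-one-extra-coordinate construction sketched above, verifying the two-sided distance estimate geodesic-by-geodesic. Everything else (the Lipschitz bookkeeping, the induction on $\gamma$) is routine once these two points are in place.
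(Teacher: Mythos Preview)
Your proposal has two genuine gaps that, as written, prevent it from reaching $O(\log\gamma)$.

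\medskip
\textbf{Cutting along a path does not split the instance.} The ``standard planar surgery'' of cutting open along a shortest path $P$ (duplicating its vertices and edges) does not disconnect the graph; it merges the two faces touched by the endpoints of $P$ into one larger face, producing a single plane graph with $\gamma-1$ terminal faces. There are no ``two plane subinstances each with at most $\lceil\gamma/2\rceil+O(1)$ terminal faces'' to recurse on. To actually separate a $2$-connected plane graph you would need a cycle (e.g., a fundamental-cycle separator built from two shortest paths to a root), and then you must argue that such a separator balances the $\gamma$ marked faces---your auxiliary-graph remark gestures at this, but it is a different construction from the one you describe, and the resulting separator is no longer a single shortest path, which undermines your gluing step.

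\medskip
\textbf{The per-level losses compose multiplicatively, not additively.} Even granting a balanced split, your analysis says ``same-side pairs inherit the recursive distortion, cross pairs incur $O(1)$.'' But the recursive embedding of a side is $1$-Lipschitz with respect to the \emph{side's} shortest-path metric, which can be strictly larger than $d_G$ (a $d_G$-geodesic between two same-side terminals may cross $P$). To make the glued map $1$-Lipschitz for $d_G$ you must rescale, and that rescaling is exactly the constant you then ``sum''; but rescaling at every level multiplies. The recursion $D(\gamma)\le C\cdot D(\gamma/2)$ gives $D(\gamma)\le C^{\log_2\gamma}=\gamma^{\log_2 C}$, i.e.\ a polynomial bound---this is essentially the mechanism behind the prior $3\gamma$ and $2^{O(\gamma)}$ results you are trying to improve. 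Getting \emph{additive} accumulation would require a global coordinate system fixed in advance (so that levels contribute disjoint coordinates without rescaling), and nothing in your direct-sum-plus-distance-to-$P$ sketch provides that.

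\medskip
\textbf{How the paper avoids both issues.} The paper does not recurse. It applies the Peeling Lemma \emph{once}, with the peeled set $A$ equal to a shortest-path tree touching all $\gamma$ faces (the union of shortest paths from a fixed root to one vertex per face). Cutting this tree open merges all $\gamma$ faces simultaneously, producing an OS-instance in one step. The price is that $(A,d_G)$ is no longer a path metric; the key external ingredient is Sidiropoulos's theorem that the $d_G$-metric on a union of $m$ shortest paths sharing an endpoint embeds stochastically into trees with stretch $O(\log m)$. Thus the total stretch is $O(1)\cdot O(\log\gamma)$ from a single composition, not $O(1)^{\log\gamma}$ from iterated halving.
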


A long-standing conjecture \cite{GNRS04}
asserts that $\conetrms(\cF) < \infty$ for every family $\cF$ of finite graphs
that is closed under taking minors and does not contain all finite graphs.
If true, this conjecture would of course imply that one
can replace the bound of Theorem~\ref{THM: main result embedding} with a universal constant.

It is known that a plane graph $G$ has treewidth $O(\sqrt{\gamma(G)})$ \cite{KLL02}.
If we use $\cF_{\mathrm{tw}}(w)$ and $\cF_{\mathrm{pw}}(w)$ to denote the families of graphs of treewidth $w$ and
pathwidth $w$, respectively, then
it is known that $\conetrms(\cF_{\mathrm{tw}}(2))$ is finite \cite{GNRS04},
but this remains open for $\conetrms(\cF_{\mathrm{tw}}(3))$.
(On the other hand, $\conetrms(\cF_{\mathrm{pw}}(w))$ is finite for every $w \geq 1$ \cite{LS13}, and
currently the best quantitative bound
is $\conetrms(\cF_{\mathrm{pw}}(w)) \leq O(\sqrt{w})$ \cite{AFGN18}.)

The parameter $\gamma(G;\Term)$
was previously studied in the context of other computational problems,
including the Steiner tree problem \cite{EMV87, Bern90, KNL19},
all-pairs shortest paths \cite{Fred95},
and cut sparsifiers \cite{KR17arxiv,KPZ18}.
For a planar graph $G$ (without a drawing) and $\Term \subseteq V(G)$,
the \emph{terminal face cover}, denoted $\gamma^*(G;\Term)$,
is the minimum number of faces that cover $\Term$
in all possible drawings of $G$ in the plane.
All our results, including Theorems~\ref{THM: main result embedding},
\ref{THM: main result flow-cut gap}, and~\ref{THM: UB embedding into trees},
hold also for the parameter $\gamma^*(G;\Term)$,
simply because the relevant quantities do not depend on the graph's drawing.
When $G$ and $T$ are given as input,
$\gamma(G;\Term)$ can be computed in polynomial time~\cite{BM88},
but computing $\gamma^*(G;\Term)$ is NP-hard~\cite{BM88}.
In other words, while finding faces that cover $\Term$ optimally
in a given drawing is tractable,
finding an optimal drawing is hard.

\subsection{The flow-cut gap}
\label{SEC: intro. flow cut gap}

We now define the flow-cut gap, and briefly explain its connection to $\conetrms$.
Consider an undirected graph $G$ with terminals $\Term = \Terminals(G)$.
Let $c : E(G) \to \R_+$ denote an assignment of {\em capacities} to edges,
and $d : {\binom{\Term}{2}} \to \R_+$ an assignment of {\em demands.}
The triple $(G,c,d)$ is called an (undirected) {\em network.}
The \emph{concurrent flow} value of the network is the maximum value $\lambda>0$,
such that $\lambda\cdot d(\{s,t\})$ units of flow can be routed between
every demand pair $\{s,t\} \in \binom{\Term}{2}$, simultaneously but as separate commodities,
without exceeding edge capacities.

Given the network $(G,c,d)$ and a subset $S\subset V$,
let $\capa(S)$ denote the total capacity of edges crossing the cut $(S, V\setminus S)$,
and let $\dem(S)$ denote the sum of demands $d(\{s,t\})$ over all pairs $\{s,t\} \in \binom{\Term}{2}$ that cross the same cut.
The \emph{sparsity of a cut $(S,V\setminus S)$} is defined as $\capa(S)/\dem(S)$,
and the \emph{sparsest-cut value of $(G,c,d)$} is the minimum sparsity over all cuts in $G$.
Finally, the
\emph{flow-cut gap} in the network $(G,c,d)$ is defined as the ratio
\[
  \gap(G,c,d)
  \eqdef \frac {\operatorname{sparsest-cut}(G,c,d)}
               {\operatorname{concurrent-flow}(G,c,d)}
  \geq 1\,,
\]
where the inequality is a basic exercise.

For a graph $G$ (without capacities and demands),
denote $\gap(G;\Terminals) \eqdef \sup_{c,d} \gap(G,c,d)$,
where $c$ and $d : \binom{\Term}{2} \to \R_+$ range over assignments of capacities and demands as above.
The following theorem presents the fundamental duality between flow-cut gaps and $\ell_1$ distortion.

\begin{theorem}[\cite{AR98,LLR95,GNRS04}]\label{THM: flow cut = embedding}
  For every finite graph $G$ with terminals $\Terminals\subseteq V(G)$,
  \[ \gap(G; \Terminals)=\conetrms(G; \Terminals) \, . \]
\end{theorem}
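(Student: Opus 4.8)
The plan is to establish Theorem~\ref{THM: flow cut = embedding} by the standard LP-duality argument, viewing both the sparsest cut and the concurrent flow as linear programs and relating them through $\ell_1$ metrics. First I would recall that the concurrent-flow value of $(G,c,d)$ equals, by LP duality, the optimal value of the dual program, which after a routine manipulation can be written as $\min_{\ell} \frac{\sum_{e} c(e)\ell(e)}{\sum_{\{s,t\}} d(\{s,t\})\, d_{G,\ell}(s,t)}$, where $\ell$ ranges over nonnegative edge lengths; this is the classical observation that the dual of the maximum concurrent multiflow is a ``fractional'' relaxation in which one assigns lengths to edges and charges each demand pair its shortest-path distance. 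Thus $\operatorname{concurrent-flow}(G,c,d) = \min_{\ell} \frac{\langle c, \ell\rangle}{\sum d(\{s,t\}) d_{G,\ell}(s,t)}$.

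Next I would observe that the sparsest-cut value has an analogous extremal description, but with the minimization restricted to \emph{cut metrics}: a cut $(S, V\setminus S)$ induces the semimetric $\delta_S(u,v) = \1[|\{u,v\}\cap S| = 1]$, and $\frac{\capa(S)}{\dem(S)} = \frac{\sum_e c(e)\delta_S(e)}{\sum_{\{s,t\}} d(\{s,t\})\delta_S(s,t)}$. Since $\ell_1$ metrics are exactly nonnegative combinations of cut metrics (the cut-cone characterization), and the ratio above is invariant under scaling and behaves well under taking nonnegative combinations (the minimum of the ratios over the cuts in the combination bounds the ratio of the combination), one gets that $\operatorname{sparsest-cut}(G,c,d) = \min_{f : V(G)\to \ell_1} \frac{\sum_e c(e)\|f(u)-f(v)\|_1}{\sum_{\{s,t\}} d(\{s,t\})\|f(s)-f(t)\|_1}$, where $\langle u,v\rangle$ denotes the endpoints of $e$.

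With these two extremal formulations in hand, the gap $\gap(G,c,d)$ becomes the worst-case ratio, over all $c$ and $d$, between the best ``arbitrary-length'' relaxation and the best ``$\ell_1$-realizable'' one. The remaining step is to convert this into the distortion statement: for the direction $\gap(G;\Term) \leq \conetrms(G;\Term)$, given any length $\ell$ achieving the concurrent-flow optimum, apply the $1$-Lipschitz embedding $F : V(G)\to\ell_1$ of distortion $D = \conetrms(G,\ell;\Term)$ furnished by the definition; then $\|F(u)-F(v)\|_1 \le d_{G,\ell}(u,v)$ bounds the numerator, while $\|F(s)-F(t)\|_1 \ge \frac1D d_{G,\ell}(s,t)$ bounds the denominator from below, so the $\ell_1$ ratio is at most $D$ times the length ratio, giving the cut upper bound. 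For the reverse direction $\gap(G;\Term) \ge \conetrms(G;\Term)$, given a metric $d_{G,\ell}$ on $\Term$ witnessing distortion close to $\conetrms(G;\Term)$, choose demands $d$ supported on $\binom{\Term}{2}$ and capacities $c = \ell$ so that the concurrent-flow optimum is (essentially) realized by $\ell$ itself while every $\ell_1$ embedding pays the distortion; this requires the standard trick of taking $d$ to be the ``worst pair'' distribution dual to the embedding, or invoking an LP/min-max argument to select it. I expect this reverse direction to be the main obstacle, since it requires producing a single demand pattern for which no $\ell_1$ embedding does better than the distortion bound — this is where one genuinely uses the cut-cone structure together with an LP-duality (or minimax) argument over the simplex of demand patterns, rather than a direct construction. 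Since the theorem is quoted from~\cite{AR98,LLR95,GNRS04}, I would present this as a sketch and cite those sources for the full details of the minimax step.
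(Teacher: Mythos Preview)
The paper does not prove this theorem; it is stated as a known result and attributed to \cite{AR98,LLR95,GNRS04} with no accompanying argument. Your sketch is the standard LP-duality/cut-cone argument from those references, and is correct in outline, so there is nothing to compare against in the paper itself.
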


Thus our main result (Theorem~\ref{THM: main result embedding})
can be stated in terms of flow-cut gaps as follows.

\begin{theorem}\label{THM: main result flow-cut gap}
   For every plane graph $G$ and terminal set $\Term \subseteq V(G)$,
   \[\gap(G; \Terminals) \leq O(\log \gamma(G;\Term))\,.\]
\end{theorem}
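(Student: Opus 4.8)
The statement is immediate by combining the two results already in hand. By Theorem~\ref{THM: flow cut = embedding}, $\gap(G;\Terminals)=\conetrms(G;\Terminals)$ for every finite graph $G$ with terminal set $\Terminals$, and by Theorem~\ref{THM: main result embedding}, $\conetrms(G;\Term)\le O(\log\gamma(G;\Term))$ for every plane graph $G$; chaining the two yields the claimed bound. So the entire content lies in proving Theorem~\ref{THM: main result embedding}, and it is that proof I sketch next.

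To bound $\conetrms(G;\Term)$ one fixes edge lengths $\ell$ and works with the shortest-path metric $d_{G,\ell}$. The plan is to build a \emph{stochastic embedding into trees} of distortion $O(\log\gamma)$: a random tree metric $d_T$ on $V(G)$ that dominates $d_{G,\ell}$ pointwise and satisfies $\E[d_T(s,t)]\le O(\log\gamma)\cdot d_{G,\ell}(s,t)$ for every terminal pair $s,t\in\Term$ — this is exactly the content of Theorem~\ref{THM: UB embedding into trees}. Since a tree metric embeds isometrically into $\ell_1$, averaging the per-tree embeddings and rescaling by $\Theta(1/\log\gamma)$ produces a single map $F:V(G)\to\ell_1$ that is $1$-Lipschitz on all of $V(G)$ and has bilipschitz distortion $O(\log\gamma)$ on $\Term$. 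It is important that the one-sided (domination) guarantee is what keeps $F$ non-expanding on \emph{all} pairs, while the expected-stretch guarantee is needed only on terminal pairs — which is all $\conetrms$, and hence the flow-cut gap, sees.

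The random tree is produced by a hierarchical random partition of $(V(G),d_{G,\ell})$, and the whole point is to beat the generic $O(\log n)$ bound by driving the recursion by $\gamma$ rather than by $n$ or $|\Term|$. The Okamura--Seymour Theorem~\cite{OS81} serves as the base case: when $\gamma(G;\Term)=1$ it already provides a distortion-$1$ embedding, so the recursion should decrease the face-cover parameter. The key geometric step is to show that in a plane graph whose terminals lie on $\gamma$ distinguished faces one can always find a curve/cut in the drawing that is \emph{balanced} with respect to the face cover — splitting the $\gamma$ faces into two groups, each meeting at most $\lceil 2\gamma/3\rceil$ of them — so that cutting the plane open along it (as in \cite{LS09,CSW13}) and recursing on the two sides terminates after $O(\log\gamma)$ levels. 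At each level one pays only an $O(1)$ factor in expected stretch for terminal pairs, by randomizing the cut (choosing a random scale/radius, in the spirit of a padded decomposition) so that a terminal pair at distance $\delta$ is separated at a level of diameter scale $\Delta$ with probability $O(\delta/\Delta)$, and by invoking OS as a black box inside each piece produced after cutting. Telescoping the $O(1)$ per-level cost over the $O(\log\gamma)$ levels gives expected stretch $O(\log\gamma)$.

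The main obstacle is exactly this gluing/charging across the $O(\log\gamma)$ recursion levels. One must cut the graph open along suitable shortest paths between the $\gamma$ faces so that simultaneously (i) the face-cover parameter genuinely drops by a constant factor, (ii) terminal distances are not blown up — distances in the cut-open graph must agree with $d_{G,\ell}$ up to the randomized slack one is willing to pay — and (iii) the OS theorem remains applicable on each resulting piece. Arranging these three requirements at once, and verifying that the per-level losses \emph{add} rather than \emph{multiply} (so that the final bound is $O(\log\gamma)$, and not $2^{O(\gamma)}$ as in \cite{LS09} or $O(\gamma)$ as in \cite{CSW13}), is the crux. The claimed tightness on a square lattice with $\Theta(\gamma)$ vertices shows that $O(\log\gamma)$ is the correct answer, so logarithmic recursion depth is unavoidable.
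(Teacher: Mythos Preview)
Your first paragraph is exactly right and is precisely how the paper derives this theorem: combine the duality of Theorem~\ref{THM: flow cut = embedding} with the embedding bound of Theorem~\ref{THM: main result embedding}. Your reduction of the latter to a stochastic terminal embedding into trees (Theorem~\ref{THM: UB embedding into trees}) is also the paper's route.

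Two issues with the remainder. First, a small but real slip: you have the two guarantees of a stochastic terminal embedding interchanged. What is needed (and what the paper proves) is non-contraction on \emph{terminal} pairs together with expected stretch $O(\log\gamma)$ on \emph{all} pairs of $V(G)$; after averaging and rescaling this yields a map that is $1$-Lipschitz on all of $V(G)$ with distortion $O(\log\gamma)$ on $\Term$. Your version --- pointwise domination $d_T\geq d_{G,\ell}$ on all pairs but expected stretch controlled only on terminal pairs --- gives no upper bound on $\E[d_T(u,v)]$ for non-terminal $u,v$, so the averaged map need not be Lipschitz on $V(G)$. (Domination makes the averaged map non-\emph{contracting}, not non-expanding.)

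Second, and more substantively, your sketch of Theorem~\ref{THM: UB embedding into trees} is a genuinely different plan from the paper's and carries the very gap you yourself flag. You propose recursion on $\gamma$: find a balanced cut that roughly halves the face cover, cut open, and recurse for $O(\log\gamma)$ levels with Okamura--Seymour as the base case. You then say the crux is making per-level losses \emph{add} rather than \emph{multiply} --- and that is indeed the crux, but you supply no mechanism. The padded-decomposition heuristic you invoke (separation probability $O(\delta/\Delta)$ at scale $\Delta$) telescopes over a geometric sequence of \emph{diameter} scales; here the recursion is indexed by $\gamma$, and there is no such sequence to charge against. Iterated peeling along shortest paths is exactly what \cite{LS09} and \cite{CSW13} do, and their losses compound to $2^{O(\gamma)}$ and $3\gamma$. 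Naming the obstacle is not the same as overcoming it.

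The paper sidesteps recursion entirely. It applies the Peeling Lemma \emph{once}: the set $A$ is a shortest-path tree touching all $\gamma$ distinguished faces (a union of $\gamma-1$ shortest paths from a common root), and Klein's Tree-Cut along $A$ merges all $\gamma$ faces into one, so that each piece $\HAa$ produced by peeling is already an OS-instance. The entire $O(\log\gamma)$ factor then comes from a single invocation of Sidiropoulos's theorem (Theorem~\ref{THM: log k Tasos}), which stochastically embeds the induced metric $(V(A),d_G)$ --- a union of $\gamma-1$ shortest paths sharing an endpoint --- into trees with expected stretch $O(\log\gamma)$. Composing with the $O(1)$-stretch peel and the $O(1)$ OS-to-trees step (Theorems~\ref{thm:retract} and~\ref{thm:gnrs}) finishes the argument.
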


\begin{remark}
   It is straightforward to check that our argument
   yields a polynomial-time algorithm that, given a plane graph $G$
   and capacities $c$ and demands $d : \binom{\Term}{2} \to \R_+$,
   produces a cut $(S, V(G) \setminus S)$ whose sparsity
   is within an $O(\log \gamma(G;\Term))$ factor of the sparsest cut
   in the flow network $(G,c,d)$.
\end{remark}

\subsection{The vertex-capacitated flow-cut gap}

One can consider
the analogous problems in more general networks;
for instance, those which are \emph{vertex-capacitated} (instead of edge-capacitated).
In that setting, bounding the flow-cut gap appears to be significantly more challenging than for edge capacities.
The authors of \cite{FHL05} establish that the vertex-capacitated flow-cut gap is $O(\log k)$
for general networks with $k$ terminals, and this bound is known to be tight \cite{LR99}.

For planar networks, Lee, Mendel, and Moharrami~\cite{LMM15}
sought a vertex-capacitated version of the Okamura-Seymour Theorem \cite{OS81},
and proved that the vertex-capacitated flow-cut gap is $O(1)$
for instances $(G,\Term)$ satisfying $\gamma(G;\Term)=1$.

However, it was not previously known whether the gap is bounded even for $\gamma(G;\Term)=2$.
We prove that in planar vertex-capacitated networks $(G,\Term)$
with $\gamma=\gamma(G;\Term)$, the flow-cut gap is $O(\gamma)$; see Theorem~\ref{thm:poly-flowcut}.
In fact, we prove this result in the more general setting of submodular vertex capacities, also known as \emph{polymatroid networks.}
This model was introduced in \cite{CKRV15} as a generalization of vertex capacities,
and the papers \cite{CKRV15,LMM15} showed that more refined methods in metric embedding
theory are able to establish upper bounds on the flow-cut gap
even in this general setting.

\subsection{Stochastic embeddings}
\label{SEC: intro. embedding}

Instead of embedding plane graphs with a given $\gamma(G;\Term)$ directly into $\ell_1$,
we will establish the stronger result that such instances can be randomly approximated
by trees in a suitable sense.

If $(X,d_X)$ is a finite metric space and $\cF$ is a family of finite metric spaces, then a
{\em stochastic embedding of $(X,d_X)$ into $\cF$} is a probability distribution $\mu$
on pairs $(\f,(Y,d_Y))$ such that $\f : X \to Y$, $(Y,d_Y)\in \cF$, and
$d_Y(\f(x),\f(x')) \geq d_X(x,x')$ for all $x,x' \in X$.
The {\em expected stretch of $\mu$} is defined by
\ifprocs
\[
  \dist(\mu)
  \eqdef \max_{x\neq x' \in X} \left\{ \frac{\E_{(\f,(Y,d_Y)) \sim \mu} \left[d_Y(\f(x),\f(x'))\right]}{d_X(x,x')}\right\}.
\]
\else
\[
  \dist(\mu)
  \eqdef \max \left\{ \frac{\E_{(\f,(Y,d_Y)) \sim \mu} \left[d_Y(\f(x),\f(x'))\right]}{d_X(x,x')} : x\neq x' \in X\right\}.
\]
\fi

We will refer to an undirected graph $G$ equipped with edge lengths $\ell_G : E(G) \to \R_+$
as a {\em metric graph,}
and use $d_G$ to denote the corresponding shortest-path distance.
If $G$ is equipped implicitly with a set $\Term(G) \subseteq V(G)$
of terminals, we refer to it as a {\em terminated graph.}
A graph equipped with both lengths and terminals will be called a {\em terminated metric graph.}
We will consider any graph or metric graph $G$ as terminated
with $\Term(G)=V(G)$ if terminals are not otherwise specified.

Given a terminated metric graph $G$,
a \emph{stochastic terminal embedding of $G$ into a family $\mathcal{F}$ of terminated metric graphs} is a distribution $\mu$ over pairs $(\varphi,F)$
such that $\varphi:V(G)\to V(F)$; the graph $F\in \mathcal{F}$;
the terminals map to terminals:
\[
  \forall t\in \Terminals(G), \qquad
  \Pr \big[ \varphi(t)\in \Terminals(F) \big] = 1\, ;
\]
and the embedding is non-contracting on terminals:
\begin{equation}
  \label{EQ: dominating graph embedding}
  \forall s,t\in \Term(G),\quad
  \Pr_{(\varphi,F)\sim\mu} \big[d_F(\varphi(s),\varphi(t))
    \geq d_G(s,t)\big] = 1\,.
\end{equation}
The \emph{expected stretch} of this embedding, again denoted $\dist(\mu)$,
is defined just as for general metric spaces:
\ifprocs
    \begin{equation}
      \label{EQ: exp distortion graph embedding}
      \dist(\mu) \eqdef \max_{u \neq v \in V(G)} \left\{ \frac{\E_{(\f,F)\sim \mu}\left[d_F(\f(u),\f(v))\right]}{d_G(u,v)}\right\}.
    \end{equation}
\else
    \begin{equation}
      \label{EQ: exp distortion graph embedding}
      \dist(\mu) \eqdef \max \left\{ \frac{\E_{(\f,F)\sim \mu}\left[d_F(\f(u),\f(v))\right]}{d_G(u,v)}: u\neq v \in V(G) \right\}.
    \end{equation}
\fi

\begin{theorem}\label{THM: UB embedding into trees}
   Consider a terminated metric plane graph $G$ with $\gamma = \gamma(G;\Term(G))$.
   Then $G$ admits a stochastic terminal embedding into the family of metric trees
   with expected stretch $O(\log \gamma)$.
\end{theorem}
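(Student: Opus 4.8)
The plan is to reduce the general case to the Okamura--Seymour setting ($\gamma=1$) by a divide-and-conquer scheme on the set of bounding faces, using the fact that an OS-instance embeds isometrically into $\ell_1$ (equivalently, into a distribution over trees, after passing through the cut-cone representation) with distortion $1$. Let $f_1,\dots,f_\gamma$ be faces whose union covers $\Term(G)$. The key geometric tool is \emph{terminal face surgery}: given a plane graph and a set of $\gamma$ terminal-bearing faces, one can find a short curve (a path in the dual, or a Jordan arc through the graph) that separates the plane into two regions each containing at most a constant fraction — say $2\gamma/3$ — of the $\gamma$ faces, and then ``cut'' the graph along this curve. Cutting duplicates the vertices and edges met by the curve, producing two plane graphs $G_1,G_2$, glued along a common path $P$; crucially, $P$ lies on the outer face of each $G_i$, so each $G_i$ has face-cover parameter at most $2\gamma/3+1$ with respect to its terminals (the original terminal faces it inherits, plus the new outer face containing $P$). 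Iterating the recursion gives a laminar family of cuts of depth $O(\log\gamma)$.

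The recursion is assembled as follows. At the base, when $\gamma=1$, invoke the Okamura--Seymour theorem (as quoted in the excerpt, $\conetrms(G;\Term)=1$) together with Theorem~\ref{THM: flow cut = embedding} and the standard equivalence between $\ell_1$-embeddability and stochastic embedding into trees, to obtain a stochastic terminal embedding into metric trees with expected stretch $O(1)$. For the inductive step, sample the separating curve, cut into $G_1$ and $G_2$, recursively sample stochastic terminal tree-embeddings $\mu_1,\mu_2$ of $G_1$ and $G_2$ — making the new path-terminals of each $G_i$ terminals so the induction hypothesis applies to them — and then \emph{glue} the two sampled trees along the images of $P$ by identifying (or connecting via zero/short-length edges that can be contracted) the corresponding terminal vertices. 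Non-contraction on the original terminals follows because any original shortest path in $G$ either stays in one side (handled directly by that side's embedding) or crosses $P$, in which case it can be re-routed through the path terminals and its length is a sum of cross-distances that the two sub-embeddings individually dominate. The expected stretch accumulates additively over the $O(\log\gamma)$ recursion levels, each contributing $O(1)$, giving $O(\log\gamma)$ total. One must take modest care that edge lengths are preserved under cutting (duplicated edges keep their length) and that the glued object is still a tree — this is where a minor technical massaging (e.g. taking a union of trees sharing a root-path, then shortcutting) is needed.

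The main obstacle — and the step I expect to require the most work — is the \emph{topological separation lemma}: showing that for any $\gamma$ terminal faces in a plane graph there is a curve cutting along which (i) balances the face count and (ii) after cutting, each side is again a plane graph whose terminals are covered by at most $\approx 2\gamma/3+O(1)$ faces, with the new boundary lying on a single face of each side. The natural route is to work in the face-incidence structure (or a planar ``pattern graph'' on the $\gamma$ faces): pick a spanning tree of the dual restricted to a region containing the terminal faces, find a centroid edge/vertex of that tree giving a balanced dual separator, and translate this dual separator back into a primal curve through $O(1)$ vertices per face crossing — the subtlety is controlling how many \emph{new} faces the cut creates and ensuring the path $P$ genuinely sits on the outer face of both pieces, which may force the separating curve to be chosen as a shortest dual path between two of the terminal faces rather than an arbitrary balanced separator. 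A secondary obstacle is bookkeeping the distances across the cut so that gluing is non-contracting for \emph{all} pairs of original terminals simultaneously, not just those separated at the top level; this is handled by the laminar structure but needs a clean invariant (e.g. every original shortest path is covered by a bounded-depth chain of sub-instances, and cross-distances telescope).
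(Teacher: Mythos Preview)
Your proposal has a genuine gap, and it lies not in the topological separation lemma you flag as the main obstacle, but in the claim that ``expected stretch accumulates additively over the $O(\log\gamma)$ recursion levels.'' Composition of stochastic embeddings multiplies stretch. Concretely: suppose you cut along a shortest path $P$ into $G_1,G_2$, recursively obtain random trees $T_1,T_2$ with expected stretch $S(\gamma_i)$, and then combine them. Identifying $T_1$ and $T_2$ along the images of \emph{all} of $P$ does not give a tree---the images of $P$ in $T_1$ and $T_2$ are two independent random metrics on $V(P)$, and gluing them vertex-by-vertex creates cycles; this is not ``minor technical massaging.'' If instead you glue at a single point $r\in P$, then for $u\in V(G_1)\setminus P$ and $v\in V(G_2)\setminus P$ you get $\E[d_T(u,v)] \le S(\gamma_1)\,d_{G_1}(u,r)+S(\gamma_2)\,d_{G_2}(r,v)$, and the right-hand side is not controlled by $d_G(u,v)$ unless $r$ happens to lie on a shortest $u$--$v$ path. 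Handling all pairs simultaneously by randomizing the attachment point is precisely what the Peeling Lemma does, and it incurs a constant \emph{multiplicative} factor per application. Your recursion therefore yields at best $S(\gamma)\le C\cdot S(2\gamma/3+O(1))$, which solves to $\gamma^{O(1)}$, not $O(\log\gamma)$; indeed, the iterative-peeling route you are rediscovering was already known to give only $3\gamma$. A smaller issue: there is no ``standard equivalence between $\ell_1$-embeddability and stochastic embedding into trees'' (the cut-cone gives line metrics, not tree metrics), so your base case requires the separate results that OS-instances embed stochastically into trees with $O(1)$ stretch.

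The paper avoids recursion entirely. It applies the Peeling Lemma \emph{once}, taking $A$ to be the union of $\gamma-1$ shortest paths from a common root to one vertex on each terminal face. Cutting $G$ open along this spider (via Klein's tree-cut operation) merges all $\gamma$ terminal faces into a single new face, so every piece $H_A^a$ in the peeled graph is already an OS-instance. The $O(\log\gamma)$ then comes not from recursion depth but from a single invocation of a theorem of Sidiropoulos: the metric induced by $d_G$ on a union of $m$ shortest paths sharing a common endpoint embeds stochastically into trees with stretch $O(\log m)$. One composition of the $O(1)$ from peeling, the $O(\log\gamma)$ from embedding the spider, and the $O(1)$ from embedding OS-instances into trees gives the bound.
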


Theorem \ref{THM: UB embedding into trees} immediately yields
Theorem \ref{THM: main result embedding} using the fact that
every finite tree metric embeds isometrically into $\ell_1$
(see, e.g., \cite {GNRS04} for further details).
The bound $O(\log \gamma)$ is optimal up to the hidden constant,
as it is known that for an $m \times m$ planar grid
equipped with uniform edge lengths, the expected stretch
of any stochastic embedding into metric trees is at least $\Omega(\log m)$ \cite{KRS01}.
(A similar lower bound holds for the diamond graphs \cite{GNRS04}.)

Theorem~\ref{THM: UB embedding into trees} may also be of independent interest
(including when $\Term(G)=V(G)$)
as embedding into dominating trees has many applications,
including to competitive algorithms for online problems
such as buy-at-bulk network design \cite{AA97},
and to approximation algorithms for combinatorial optimization, e.g.,
for the group Steiner tree problem \cite{GKR00}.
We remark that stochastic terminal embeddings into metric trees were
employed by~\cite{GNR10} in the context of approximation algorithms,
and were later used in~\cite{EGKRTT14} to design flow sparsifiers.

\section{Approximation by random trees}\label{SEC: preliminaries}

Before introducing our primary technical tools, we will motivate
their introduction with a high-level overview of the proof of Theorem \ref{THM: UB embedding into trees}.
Fix a terminated metric plane graph $G$ with $\gamma=\gamma(G;\Term(G)) > 1$.
Our plan is to approximate $G$ by an OS-instance (where
all terminals lie on a single face) by uniting the $\gamma$
faces covering $\Term(G)$, while approximately preserving the shortest-path metric on $G$.
The use of stochastic embeddings will come from our need to perform this approximation randomly,
preserving distances only in expectation.
Using the known result that OS-instances admit
stochastic terminal embeddings into metric trees, this will complete the proof.

A powerful tool for randomly ``simplifying'' a graph
is the Peeling Lemma~\cite{LS09},
which informally ``peels off'' any subset $A\subset V(G)$ from $G$,
by providing a stochastic embedding of $G$ into graphs obtained by ``gluing''
copies of $G\setminus A$ to the induced graph $G[A]$.
The expected stretch of the embedding
depends on how ``nice'' $A$ is;
for example, it is $O(1)$ when $A$ is a shortest path in a planar $G$.
The Peeling Lemma can be used to stochastically embed $G$ into dominating OS-instances with expected stretch $2^{O(\gamma)}$ \cite[Section 4.5]{CSW13},
by iteratively peeling off a shortest path $A$ between two special faces
(which has the effect of uniting them into a single face).

In contrast, our argument applies the Peeling Lemma only once.
We pick $A$ to form a connected subgraph in $G$ that spans
the $\gamma$ distinguished faces.
By cutting along $A$, one effectively merges all $\gamma$ faces
into a single face in a suitably chosen drawing of $G\setminus A$ .
The Peeling Lemma then provides a stochastic terminal embedding of $G$ into a family
of OS-instances that are constructed from copies of $A$ and $G\setminus A$.

The expected stretch we obtain via the Peeling Lemma
is controlled by how well the (induced) terminated metric graph on $A$
can be stochastically embedded into a distribution over metric trees.
For this purpose, we choose the set $A$ to be a shortest-path tree in $G$
that spans the $\gamma$ distinguished faces, and
then use a result of Sidiropoulos~\cite{Sidiropoulos10}
to stochastically embed $A$ into metric trees
with expected stretch that is logarithmic in the number of {\em leaves}
(rather than logarithmic in the number of vertices, as in
stochastic embeddings for general finite metric spaces \cite{FRT04}).
We remark that this is non-trivial because, while $A$ is (topologically)
a tree spanning $\gamma$ faces, the relevant metric on $A$ is $d_G$
(which is not a path metric on $G[A]$).
 
\subsection{Random partitions, embeddings, and peeling}

\newcommand{\GAa}{G_{\!A}\strut^{\!\!\!a}}
\newcommand{\HAa}{H_{\!A}\strut^{\!\!\!a}}
\newcommand{\GhatA}{\widehat{G}_{\!A}}
\newcommand{\HhatA}{\widehat{H}_{\!A}}

For a finite set $S$, we use $\Trees(S)$ to denote the set of all metric spaces $(S,d)$
that are isometric to $(V(T),d_T)$ for some metric tree $T$.

\begin{theorem}[Theorem 4.4 in \cite{Sidiropoulos10}]\label{THM: log k Tasos}
     Let $G$ be a metric graph, and let $P_1, \ldots,P_m$ be shortest paths in $G$ sharing a common endpoint. Then the metric space $\big(\cup_{i=1}^m V(P_i), d_G \big)$ admits a stochastic embedding into $\Trees(\cup_{i=1}^m V(P_i))$ with expected stretch $O(\log m)$.
\end{theorem}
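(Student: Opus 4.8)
\medskip
\noindent\textbf{Proof plan.}
Let $r$ be the common endpoint of $P_1,\dots,P_m$, set $U := \bigcup_{i=1}^{m} V(P_i)$ and let $d := d_G$ restricted to $U$; since each $P_i$ is a shortest path from $r$, for $u,v$ lying on a common $P_i$ one has $d(u,v) = |h(u)-h(v)|$, where $h(x):=d_G(r,x)$. It suffices to produce a distribution over tree metrics $\varrho$ on $U$ with $\varrho \ge d$ pointwise and $\E[\varrho(x,y)] \le O(\log m)\,d(x,y)$, since (taking the map to be the identity) this is exactly a stochastic embedding into $\Trees(U)$.

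The only structural fact I will use is that $U$ is ``$m$-thin'' over $h$: for every $x\in U$, every $\tau>0$ and every $i$, the set $V(P_i)\cap B(x,\tau)$ lies in a sub-arc of $P_i$ of $h$-length at most $2\tau$ (if $u,v$ are in it then $|h(u)-h(v)| = d(u,v) \le 2\tau$); hence a $\tau$-net of $U$ meets any ball of radius $O(\tau)$ in only $O(m)$ points. Building on this, I construct a hierarchical random partition $\{\mathcal P_k\}_k$ of $U$ in the spirit of~\cite{FRT04}: for each scale $2^k$ fix a $2^k$-net $N_k$ of $U$ and form a random partition $\mathcal Q_k$ into clusters of diameter $O(2^k)$ by scanning $N_k$ in a uniformly random order, each center claiming all not-yet-claimed points within a random radius $\beta\cdot 2^k$ (with $\log_2\beta$ uniform on $[0,1)$); then let $\mathcal P_k$ be the common refinement of $\mathcal Q_k,\mathcal Q_{k+1},\dots$ to make the family laminar. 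This yields a dominating random tree metric $\varrho$ on $U$ with level-$k$ edge length a fixed constant times $2^k$, so that $\varrho(x,y)=\Theta(2^{k^{*}})$ where $k^{*}=k^{*}(x,y)$ is the coarsest scale at which $\mathcal P_{k}$ separates $x$ and $y$; non-contraction $\varrho\ge d$ holds (after enlarging the edge-length constant) because a cluster of $\mathcal P_{k^{*}+1}$ has diameter $O(2^{k^{*}})$.

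For the expected stretch, fix $x\ne y$ and write $\delta := d(x,y)$. The scales with $2^k \lesssim \delta$ contribute $O(\delta)$ in total to $\E[2^{k^{*}}]$, since such clusters cannot contain both points. For $2^k \gtrsim \delta$, the standard single-scale cut estimate (with the log-uniform radius) gives, for constants $0<c_1<c_2$,
\[
  \Pr[\mathcal Q_k \text{ separates } x,y] \ \le\ O\!\Big(\frac{\delta}{2^k}\Big)\cdot \ln\frac{|N_k\cap B(x,c_2 2^k)|}{|N_k\cap B(x,c_1 2^k)|},
\]
and $m$-thinness bounds the numerator by $O(m)$. If one were content to lose a constant factor per scale, summing this over the $\Theta(\log(\diam(U)/\delta))$ relevant scales would already give $O\!\big(\log m\cdot\log(\diam(U)/\delta)\big)\,\delta$. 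Getting the sharp bound $O(\log m)\,\delta$, with no dependence on the aspect ratio, is the technical heart of the argument: one must set up the nets (and hence the partitions) across scales so that the logarithms above \emph{telescope}, down to $O(\log|N_{\mathrm{top}}|)=O(\log m)$, rather than contributing an additive constant at every scale. This is exactly where the \emph{geodesic} structure of the $P_i$ is used: the metric $d$ restricted to each $P_i$ is an interval and so embeds isometrically into a tree, so the ``along-path'' part of the separation cost must be charged losslessly, leaving only the ``across-path'' part -- which sees only the $m$ paths -- to account for the $\log m$. I expect this aspect-ratio elimination to be the main obstacle; the remaining ingredients ($m$-thinness, the cut estimate, laminarity via common refinements, and non-contraction) are routine.
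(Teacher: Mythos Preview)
The paper does not prove this statement; it is quoted verbatim as Theorem~4.4 of~\cite{Sidiropoulos10} and invoked as a black box (in the one-line proof of Lemma~\ref{LMA: embed V(P) to tree distortion log gamma}). There is therefore no in-paper argument against which to compare your proposal.

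Regarding your proposal on its own merits: the $m$-thinness observation is correct, and the FRT-style hierarchical partition is set up soundly. You also correctly diagnose that the naive per-scale bound $\Pr[\mathcal Q_k \text{ separates } x,y]\le O((\delta/2^k)\log m)$, when plugged into $\E[\varrho(x,y)]\asymp\sum_k 2^k\,\Pr[\text{separated at scale }k]$, produces an extra $\log(\diam(U)/\delta)$ factor, and that eliminating it is the crux. But the proposal stops exactly there. The final paragraph---that the logarithms should ``telescope'' down to $O(\log m)$, and that the ``along-path'' cost can be charged losslessly leaving only an ``across-path'' contribution---is a hope, not an argument. In the standard CKR/FRT analysis, telescoping is achieved by counting \emph{points} in growing balls and collapses to $\log|U|$, not $\log m$; switching the counting measure from points to path indices while keeping a valid cutting bound is a genuinely new step that you have not supplied. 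Nothing in the plan explains how a single random laminar partition simultaneously (i) behaves isometrically along each $P_i$ and (ii) pays only $O(\log m)$ across paths. So the proof has a real gap at its acknowledged hardest point, and it is not clear the pure FRT-with-nets route reaches $O(\log m)$ without importing further ideas specific to the geodesic structure, as in~\cite{Sidiropoulos10}.
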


Let $(X,d)$ be a finite metric space.
A distribution $\nu$ over partitions of $X$ is called $(\beta, \Delta)$-\emph{Lipschitz}
if every partition $P$ in the support of $\nu$ satisfies $S \in P \implies \diam_X(S) \leq \Delta$, and moreover,
\[
  \forall x,y \in X, \qquad
   \Pr_{P\sim \nu}[P(x)\neq P(y)]\leq \beta\cdot \frac{d(x,y)}{\Delta}\,,
\]
where for $x \in X$, we use $P(x)$ to denote the unique set in $P$ containing $x$.

We denote by $\beta_{(X,d)}$ the infimal $\beta\geq 0$ such that for every $\Delta>0$, the metric $(X,d)$ admits a $(\beta, \Delta)$-Lipschitz random partition.
The following theorem is due to Klein, Plotkin, and Rao~\cite{Klein93} and Rao~\cite{Rao99}.

\begin{theorem}\label{THM: beta=1 for planar graphs}
  For every planar graph $G$, we have $\beta_{(V(G),d_{G})} \leq O(1)$.
\end{theorem}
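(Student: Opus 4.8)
The plan is to produce, for any planar graph $G$ and any scale $\Delta > 0$, a random partition of $(V(G), d_G)$ into clusters of diameter at most $\Delta$ such that the probability a fixed pair $x,y$ is separated is $O(d_G(x,y)/\Delta)$. The classical approach of Klein, Plotkin, and Rao is a recursive ball-chopping scheme that exploits the absence of a $K_{h}$-minor; for planar graphs we only need to exclude $K_5$ (or $K_{3,3}$), so the recursion depth is a fixed constant, which is ultimately why $\beta = O(1)$ rather than growing with anything.

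First I would set up the single-scale decomposition primitive. Fix $\Delta$ and pick an arbitrary ``root'' vertex $r$. Choose a random radius $R$ uniformly in $[0, \Delta/4)$ (or according to a truncated geometric / uniform distribution; the exact choice only affects constants), and then carve $V(G)$ into the annuli
\[
  A_i = \set{ v \in V(G) : i\Delta/4 + R \le d_G(r,v) < (i{+}1)\Delta/4 + R }, \qquad i = 0,1,2,\dots
\]
Each annulus has ``width'' $\Delta/4$ in the metric $d_G$. The key geometric fact is that for a fixed pair $x,y$, the probability that the random shift $R$ places them in different annuli is at most $O(d_G(x,y)/\Delta)$, since there are only two annulus-boundaries within distance $d_G(x,y)$ of $x$ that could separate the pair and the boundary location is uniformly smeared over an interval of length $\Delta/4$. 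Within each annulus $A_i$ we then recurse: the crucial structural point is that if $G$ has no $K_h$-minor then the graph obtained by contracting all of $B(r, i\Delta/4 + R)$ to a single vertex and restricting to $A_i$ has no $K_{h-1}$-minor — contracting each previously-formed cluster witnesses a clique minor one size larger. So after $h-2$ levels of this annular chopping we reach pieces that contain no edge minor obstruction at all and have bounded diameter, and we stop. For planar $G$ we have $h = 5$, so the recursion terminates after a constant number of rounds, each round chopping along $d_G$-balls of width $\Delta/(4(h-1))$ or so.

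The second part is the union bound over levels. A pair $x,y$ is separated by the final partition only if it is separated at some level of the recursion; at each level the separation probability is $O(d_G(x,y)/\Delta)$ by the smearing argument above (rescaled by the constant width used at that level), and there are $O(1)$ levels, so the total is still $O(d_G(x,y)/\Delta)$. For the diameter bound one checks that after $h-2$ rounds each surviving piece has $d_G$-diameter $O(\Delta)$ (each round reduces the relevant ``radius from the current root'' by a factor, and after exhausting the minor count the pieces are forced to be small); rescaling $\Delta$ by the resulting constant gives clusters of diameter at most $\Delta$ exactly, as required by the definition of a $(\beta,\Delta)$-Lipschitz partition. Taking $\beta$ to be the constant that emerges yields $\beta_{(V(G),d_G)} \le O(1)$.

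The main obstacle, and the only genuinely non-routine step, is the minor-contraction bookkeeping that justifies the recursion terminating: one must argue carefully that at recursion depth $j$ the piece being decomposed, together with the contracted ``core'' of everything closer to the root, carries a $K_{j+1}$-minor whenever it contains a further nontrivial annular split — so that depth is bounded by the size of the largest clique minor minus a constant. Everything else (the random-shift smearing estimate, the union bound over a constant number of levels, the final rescaling) is standard and contributes only to the value of the hidden constant. Since here we may simply cite \cite{Klein93,Rao99} for precisely this statement, the proof reduces to invoking their theorem with $h = 5$.
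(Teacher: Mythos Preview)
The paper does not prove this theorem at all; it is stated with attribution to Klein--Plotkin--Rao~\cite{Klein93} and Rao~\cite{Rao99} and used as a black box. Your final sentence, reducing the proof to invoking those references, is therefore exactly what the paper does.

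One caveat on your sketch, in case you intend to expand it elsewhere: the claim that ``contracting $B(r,\cdot)$ and restricting to the annulus yields a graph with no $K_{h-1}$-minor'' is not how the KPR argument actually runs. Contracting a connected set in a $K_h$-minor-free graph only gives another $K_h$-minor-free graph; the forbidden-minor size does not drop. The real mechanism is a contrapositive: one shows that if, after $O(h)$ rounds of annular chopping at width $\Theta(\Delta/h)$, some piece still has large diameter, then the nested annuli together with long paths inside them can be assembled into a $K_h$-minor, a contradiction. Your identification of this step as ``the only genuinely non-routine step'' is right, but the bookkeeping is of this build-a-minor-from-large-diameter flavor rather than a clean inductive drop in $h$. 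Since you (and the paper) ultimately cite~\cite{Klein93,Rao99}, this does not affect the correctness of the proposal.
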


Let $G$ be a metric graph, and consider $A\subseteq V(G)$. The \emph{dilation of $A$ inside $G$} is defined to be
$$\dil_G(A) \eqdef \max_{u,v\in A}\frac{d_{G[A]}(u,v)}{d_G(u,v)}\,,$$
where $d_{G[A]}$ denotes the induced shortest-path distance on the metric graph $G[A]$.

For two metric graphs $G,G'$, a \emph{1-sum of $G$ with $G'$}
is a graph obtained by taking two disjoint copies of $G$ and $G'$,
and identifying a vertex $v\in V(G)$ with a vertex $v'\in V(G')$. This definition naturally extends to a $1$-sum of any number of graphs.
Note that the $1$-sum naturally inherits its length function from $G$ and $G'$.

\subsubsection{Peeling}

Consider a subset $A \subseteq V(G)$.  For $a \in A$, let $\GAa$ denote the graph $G[(V(G)\setminus A) \cup \{a\}]$.
We define the graph $\GhatA$ as the $1$-sum of $G[A]$ with $\{\GAa : a \in A\}$,
where $G[A]$ is glued to each $\GAa$ at their common copy of $a \in A$.
Let us write the vertex set of $\GhatA$ as the disjoint union:
\[
   V(\GhatA) = \hat{A} \sqcup \bigsqcup_{a \in A} \left\{ (a, v) : v \in V(G)\setminus A \right\},
\]
where $\hat{A} \eqdef \left\{ \hat{a} : a \in A \right\}$ represents the canonical image of $G[A]$ in $\GhatA$, and $(a,v)$ corresponds to the image of $v \in V(G) \setminus A$ in $\GAa$.
Say that a mapping $\psi : V(G) \to V(\GhatA)$ is a {\em selector map} if it satisfies:
\begin{enumerate}
   \item For each $a \in A$, $\psi(a)=\hat{a}$.
   \item For each $v \in V(G) \setminus A$, $\psi(v) \in \left\{ (a,v) :a \in A \right\}$.
\end{enumerate}
In other words, a selector maps each $a \in A$ to its unique copy in $\GhatA$, and maps each $v \in V(G) \setminus A$
to one of its $|A|$ copies in $\GhatA$.

\begin{lemma}[The Peeling Lemma~\cite{LS09}]\label{LMA: the peeling lemma}
   Let $G=(V,E)$ be a metric graph and fix a subset $A\subseteq V$.
   Let $G'$ be obtained by removing all the edges inside $A$:
   \[
   G' \eqdef (V, E') \qquad \textrm{with} \qquad E'=E\setminus E(G[A])\,,
   \]
   and denote $\beta=\beta_{(V,d_{G'})}$.
   Then there is a stochastic embedding $\mu$ of $G$ into the metric graph $\GhatA$
   such that $\mu$ is supported on selector maps
   has expected stretch $\dist(\mu) \leq O(\beta \cdot \dil_G(A))$.
\end{lemma}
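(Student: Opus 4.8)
The plan is to exhibit an explicit stochastic embedding $\mu$ of $G$ into $\GhatA$ that is supported on selector maps, and to argue separately about the non-contraction requirement \eqref{EQ: dominating graph embedding} and about the expected stretch. Non-contraction is free and deterministic: there is a projection $\pi\colon V(\GhatA)\to V(G)$ that forgets the copy index (sending $\hat a\mapsto a$ and $(a,v)\mapsto v$), and because $\GhatA$ is assembled by gluing subgraphs of $G$ along single vertices, every edge of $\GhatA$ is a copy of an edge of $G$ of the same length; hence $\pi$ is $1$-Lipschitz from $d_{\GhatA}$ to $d_G$. Any selector map $\psi$ is a section of $\pi$, so $d_{\GhatA}(\psi(s),\psi(t))\ge d_G(\pi\psi(s),\pi\psi(t))=d_G(s,t)$ for all $s,t$, which is exactly \eqref{EQ: dominating graph embedding}. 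Thus the real content is the bound $\dist(\mu)\le O(\beta\cdot\dil_G(A))$, and this is where the random partition of $(V,d_{G'})$ enters.

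To build the random selector I would fix, for each dyadic scale $2^j$, an independent $(\beta,2^j)$-Lipschitz random partition of $(V,d_{G'})$ (these exist by definition of $\beta=\beta_{(V,d_{G'})}$, and are trivial once $2^j>\diam(V,d_{G'})$). It is convenient that $r(v)\eqdef d_{G'}(v,A)$ coincides with $d_G(v,A)$: truncating a $d_G$-shortest $v$--$A$ path at the first vertex where it enters $A$ gives a path internally disjoint from $A$, which therefore survives in $G'$. For each $v\in V\setminus A$ I work at a scale $2^{j(v)}=\Theta(\beta\,r(v))$, let $a(v)$ be an $A$-vertex of the block containing $v$, and set $\psi(a)=\hat a$ for $a\in A$ and $\psi(v)=(a(v),v)$ otherwise. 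The crucial point in choosing $a(v)$ is that it must be reachable from $v$ by a short $A$-internally-disjoint path, so that the within-copy distance $d_{G_A^{a(v)}}(v,a(v))$ is $O(2^{j(v)})=O(\beta\,r(v))$. A clean instance is when the block contains a nearest $A$-vertex $a^*(v)$, since a $d_G$-shortest $v$--$a^*(v)$ path is automatically internally disjoint from $A$ (otherwise $A$ would contain a vertex strictly closer to $v$), giving $d_{G_A^{a^*(v)}}(v,a^*(v))=r(v)$; more generally one takes $a(v)$ to be the first $A$-vertex met along a shortest $G'$-path from $v$ to a chosen $A$-vertex of its block.

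For the expected stretch, fix $s\ne t$ and put $L=d_G(s,t)$. Once anchors are chosen, $d_{\GhatA}(\psi(s),\psi(t))$ is computed by routing through the hub $G[A]$: it equals $d_{G_A^{a(s)}}(s,a(s))+d_{G[A]}(a(s),a(t))+d_{G_A^{a(t)}}(a(t),t)$ when $a(s)\ne a(t)$, equals the within-copy distance $d_{G_A^{a(s)}}(s,t)\le d_{G_A^{a(s)}}(s,a(s))+d_{G_A^{a(s)}}(a(s),t)$ when $a(s)=a(t)$, and equals $d_{G[A]}(s,t)$ for $s,t\in A$. The hub term is controlled by the definition of dilation, $d_{G[A]}(a(s),a(t))\le\dil_G(A)\cdot d_G(a(s),a(t))\le\dil_G(A)\big(d_G(s,a(s))+L+d_G(t,a(t))\big)$, while the within-copy detours are $O(2^{j(s)})$ and $O(2^{j(t)})$ by construction. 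Then I would analyse the randomness by cases: if both $s,t$ are $d_G$-farther than $L$ from $A$ and get a common anchor, the $d_G$-geodesic between them avoids $A$ entirely, so the distance is exactly $L$; otherwise $a(s)\ne a(t)$ forces $s,t$ to be separated by the scale-$2^j$ partition for some $j$ with $2^j=O(\beta(r(s)+L))$, an event of probability $O(\beta L/2^j)$ contributing $O(\dil_G(A)\cdot 2^j)$ to the expectation, and summing this geometric series over the $O(1)$-many relevant scales yields $\E[d_{\GhatA}(\psi(s),\psi(t))]=O(\beta\cdot\dil_G(A))\cdot L$. The reason no $\log$ is lost here --- in contrast with embeddings into trees --- is that $\GhatA$ is a $1$-sum of \emph{full} subgraphs of $G$, so each vertex makes a single anchoring decision at its own scale rather than one decision per scale.

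The step I expect to be the main obstacle is precisely the control of the within-copy detours $d_{G_A^{a(v)}}(v,a(v))$. Since $G_A^a=G[(V\setminus A)\cup\{a\}]$ deletes \emph{every} vertex of $A$ except $a$, an $A$-internally-disjoint $v$--$a$ path can be far longer than $d_G(v,a)$, or even than $d_{G'}(v,a)$; so it is not enough for the random selector to place $a(v)$ near $v$ in $d_{G'}$ --- one must certify a short $A$-avoiding route, and must do so while keeping the map $v\mapsto a(v)$ Lipschitz with constant $O(\beta)$ (nearby vertices receiving a common anchor except with probability $O(\beta\,d_{G'}(\cdot,\cdot)/2^{j(v)})$) and while picking the per-vertex scale $\Theta(\beta\,r(v))$ large enough to absorb the $\beta$-padding yet small enough that close vertices share a block. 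Reconciling ``cleanly reachable anchor'', ``$O(\beta)$-Lipschitz anchor assignment'', and ``no logarithmic loss over scales'' is the heart of the argument; everything else is bookkeeping with the triangle inequality and the definitions of $\beta$ and $\dil_G(A)$.
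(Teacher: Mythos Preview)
The paper does not give its own proof of this lemma: it is quoted from \cite{LS09} (see also \cite{BLS10}), and Remark~\ref{RMK: remark peeling lemma} merely notes that the selector-map structure can be read off from the original argument. There is therefore nothing in the paper to compare your proposal against beyond the bare citation.

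That said, your plan is broadly the right shape for the \cite{LS09} proof. The non-contraction argument via the forgetful projection $\pi$ is correct and clean. For the stretch bound, assigning anchors via $(\beta,2^j)$-Lipschitz partitions of $(V,d_{G'})$ at per-vertex scales $2^{j(v)}\approx\beta\,r(v)$ is the right mechanism, and your final paragraph correctly isolates the heart of the matter. Two places where your sketch would need tightening: first, ``let $a(v)$ be an $A$-vertex of the block containing $v$'' presupposes the block meets $A$, and your fallback (``first $A$-vertex along a shortest $G'$-path to a chosen $A$-vertex of its block'') inherits the same gap; you must say what happens when the block at scale $2^{j(v)}$ misses $A$ entirely. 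Second, even when $s,t$ land in the same block, your rule can produce $a(s)\ne a(t)$ (different ``first $A$-vertices'' along different shortest paths), so the step ``if both are far from $A$ and get a common anchor, the distance is exactly $L$'' does not yet follow from your construction; to make it follow you need the anchor to be \emph{constant on blocks}, but then you lose the ``first $A$-vertex'' trick that bounded the within-copy detour $d_{\GAa}(v,a(v))$. Reconciling these two constraints --- constant-on-blocks anchor versus short within-copy route to the anchor --- is exactly the obstacle you name in your last paragraph, and it is where the real work in \cite{LS09} lies.
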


\begin{remark}\label{RMK: remark peeling lemma}
   The statement of the Peeling Lemma in~\cite{LS09} (see also~\cite{BLS10}) does not specify explicitly all the above details about the selector maps,
   but they can be easily verified by inspecting the proof.
\end{remark}

\subsubsection{Composition}

Consider now some metric tree $T \in \Trees(A)$.
Via the identification between $A$ and $\hat{A} \subseteq V(\GhatA)$, we may consider
the associated metric tree $\hat{T} \in \Trees(\hat{A})$.
Define the metric graph
$\GhatA\llbracket T\rrbracket$
with vertex set $V(\GhatA)$ and edge set
\[
   E(\GhatA\llbracket T\rrbracket) \eqdef \big(E(\GhatA) \setminus E(\GhatA[\hat{A}])\big) \cup E(\hat{T})\,,
\]
where the edge lengths are inherited from $\GhatA$ and $\hat{T}$, respectively.
In other words, we replace the edges of $\GhatA[\hat{A}]$ with those coming from $\hat{T}$.
Finally, denote by
\[
   \cF_{G,A} \eqdef \left\{ \GhatA\llbracket T\rrbracket : T \in \Trees(A) \right\}
\]
the family of all metric graphs arising in this manner.
The following lemma is now immediate.

\begin{lemma}\label{lem:Gbrack}
   Every metric graph in $\cF_{G,A}$ is a $1$-sum of some $T \in \Trees(A)$ with the graphs $\{ \GAa : a \in A \}$.
\end{lemma}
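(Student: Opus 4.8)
The plan is to unwind the definitions and verify that the edge set of $\GhatA\llbracket T\rrbracket$ decomposes exactly as the prescribed $1$-sum. First I would observe that since each graph $\GAa = G[(V(G)\setminus A)\cup\{a\}]$ contains only the single vertex $a$ of $A$, no edge of any $\GAa$ has both endpoints in $A$. Consequently, in the $1$-sum $\GhatA$ of $G[A]$ with $\{\GAa : a\in A\}$, the induced subgraph $\GhatA[\hat A]$ is precisely the canonical copy of $G[A]$, and its edge set is disjoint from the edge set of every copy of $\GAa$ sitting inside $\GhatA$. Hence $E(\GhatA)\setminus E(\GhatA[\hat A])$ is exactly the disjoint union of the edge sets of the graphs $\GAa$, $a \in A$, with lengths inherited from $G$.

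Next, by the definition of $\GhatA\llbracket T\rrbracket$, its vertex set is $V(\GhatA)$ and its edge set is $\big(E(\GhatA)\setminus E(\GhatA[\hat A])\big)\cup E(\hat T)$, where $\hat T\in\Trees(\hat A)$ is the copy of $T$ supported on $\hat A$. Combining with the previous paragraph, $\GhatA\llbracket T\rrbracket$ consists of the tree $\hat T$ on vertex set $\hat A$ together with the graphs $\{\GAa : a \in A\}$, each $\GAa$ meeting $\hat A$ in the single vertex $\hat a$, and with lengths inherited from $T$ and from the $\GAa$'s. This is exactly the description of the $1$-sum of $\hat T\cong T$ with $\{\GAa : a \in A\}$, glued along the common copies of $a$. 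Since $\cF_{G,A}$ is by definition the collection of all such $\GhatA\llbracket T\rrbracket$ as $T$ ranges over $\Trees(A)$, the lemma follows.

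There is no real obstacle here — the statement is essentially bookkeeping — and the only point requiring a moment's care is the disjointness of the edge sets, i.e.\ checking that replacing $E(\GhatA[\hat A])$ by $E(\hat T)$ modifies only the central copy of $G[A]$ and leaves each attached copy $\GAa$ intact. I would also note, to match the phrasing of the statement, that $\Trees(A)$ and $\Trees(\hat A)$ are identified via the canonical bijection $a\mapsto\hat a$, so that ``$1$-sum with some $T\in\Trees(A)$'' and ``$1$-sum with $\hat T\in\Trees(\hat A)$'' express the same conclusion.
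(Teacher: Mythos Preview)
Your proof is correct and matches the paper's treatment: the paper simply declares the lemma ``immediate'' from the definitions of $\GhatA$ and $\GhatA\llbracket T\rrbracket$, and your argument is precisely the routine unwinding of those definitions that justifies this. The one point you flag as needing care---that removing $E(\GhatA[\hat A])$ touches only the central copy of $G[A]$ because each $\GAa$ meets $\hat A$ in a single vertex---is exactly the observation that makes the lemma immediate.
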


Suppose that $\mu$ is a stochastic embedding of $G$ into $\GhatA$ that is supported on pairs $(\psi, \GhatA)$,
where $\psi$ is a selector map.
Let $\nu$ denote a stochastic embedding of $(A,d_{G})$ into $\Trees(A)$.
By relabeling vertices, we may assume that $\nu$ is supported on pairs $(\id,T)$
where $\id : A \to A$ is the identity map.
Altogether, we obtain a stochastic embedding of $G$ into $\cF_{G,A}$,
which we denote $\nu \circ \mu$ and define by
\ifprocs
\[
  \forall T \in \Trees(A),
  \quad
  (\nu\circ \mu)(\psi, \GhatA\llbracket T\rrbracket)
  \eqdef \mu(\psi,\GhatA) \cdot \nu(\id, T)\, ,
\]
\else
\[
  \forall T \in \Trees(A),
  \qquad
  (\nu\circ \mu)(\psi, \GhatA\llbracket T\rrbracket)
  \eqdef \mu(\psi,\GhatA) \cdot \nu(\id, T)\, ,
\]
\fi
where the product between the probability measures $\mu$ and $\nu$
represents drawing from the two distributions independently.
While notationally cumbersome, the following claim is now straightforward.

\begin{lemma}[Composition Lemma]\label{LMA: the composition lemma}
   It holds that
   \[
      \dist(\nu \circ \mu) \leq \dist(\nu) \cdot \dist(\mu)\,.
   \]
\end{lemma}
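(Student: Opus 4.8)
The plan is as follows. Unwinding the definition of the expected stretch, it suffices to fix an arbitrary pair $u\ne v\in V(G)$ and to prove
\[
  \E_{(\psi,\GhatA\llbracket T\rrbracket)\sim\nu\circ\mu}\big[d_{\GhatA\llbracket T\rrbracket}(\psi(u),\psi(v))\big]\ \le\ \dist(\nu)\cdot\dist(\mu)\cdot d_G(u,v).
\]
Because $\nu\circ\mu$ draws the selector map $\psi$ (according to $\mu$) and the tree $T$ (according to $\nu$) independently, I would first fix $\psi$ and average only over $T$, establishing the pointwise bound
\[
  \E_{(\id,T)\sim\nu}\big[d_{\GhatA\llbracket T\rrbracket}(\psi(u),\psi(v))\big]\ \le\ \dist(\nu)\cdot d_{\GhatA}(\psi(u),\psi(v))
\]
for every selector map $\psi$, and then average over $\psi\sim\mu$, using $\E_{\psi\sim\mu}[d_{\GhatA}(\psi(u),\psi(v))]\le\dist(\mu)\,d_G(u,v)$, which holds by the definition of $\dist(\mu)$. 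Multiplying the two bounds and invoking independence of $\psi$ and $T$ yields the displayed inequality, and taking the maximum over $u\ne v$ finishes.

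The pointwise bound rests on a simple structural fact about the $1$-sums involved. By the construction of $\GhatA$ and by Lemma~\ref{lem:Gbrack}, respectively, both $\GhatA$ and every member $\GhatA\llbracket T\rrbracket$ of $\cF_{G,A}$ are $1$-sums of the \emph{same} pieces $\{\GAa:a\in A\}$, glued at the cut vertices $\hat a\in\hat A$; the only difference is the ``center,'' which is the metric copy of $G[A]$ in $\GhatA$ and the metric tree $\hat T$ (identified with $T$) in $\GhatA\llbracket T\rrbracket$. Since each piece $\GAa$ meets the rest of the graph only through the single vertex $\hat a$, shortest paths in such a $1$-sum split at the cut vertices: if $x$ lies in the piece indexed by $a\in A$ and $y$ in the piece indexed by $b\in A$ (with the convention that a center vertex $\hat c$ lies in the piece indexed by $c$, at in-piece distance $0$ from $\hat c$), then
\[
  d_{\GhatA}(x,y)=\delta_x+d_{G[A]}(a,b)+\delta_y,
  \qquad
  d_{\GhatA\llbracket T\rrbracket}(x,y)=\delta_x+d_{T}(a,b)+\delta_y,
\]
where $\delta_x,\delta_y\ge 0$ are the distances from $x,y$ to $\hat a,\hat b$ measured inside their respective pieces. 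The point is that the $\llbracket\cdot\rrbracket$ operation leaves the pieces untouched, so $\delta_x$ and $\delta_y$ are the \emph{same} in both graphs; only the center term changes, from $d_{G[A]}(a,b)$ to $d_{T}(a,b)$. (If $x$ and $y$ lie in a common piece, then $d_{\GhatA}(x,y)=d_{\GhatA\llbracket T\rrbracket}(x,y)$ and there is nothing to prove.)

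I would then apply this with $x=\psi(u)$ and $y=\psi(v)$ for a fixed selector map $\psi$. Since $\psi$ is a selector, there are $a_u,a_v\in A$ (with $a_u=u$ and $\delta_u=0$ when $u\in A$, and similarly for $v$) such that $\psi(u),\psi(v)$ lie in the pieces indexed by $a_u,a_v$; write $\delta_u,\delta_v\ge0$ for the in-piece distances, which do not depend on $T$. As $\nu$ is a stochastic embedding of $(A,d_G)$ with expected stretch $\dist(\nu)$, we have $\E_{T\sim\nu}[d_T(a_u,a_v)]\le\dist(\nu)\,d_G(a_u,a_v)\le\dist(\nu)\,d_{G[A]}(a_u,a_v)$, the last inequality because $G[A]$ is an induced subgraph of $G$. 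Hence, using $\dist(\nu)\ge1$ (which holds since $\nu$ is non-contracting) to absorb the nonnegative terms $\delta_u,\delta_v$,
\begin{align*}
  \E_{T\sim\nu}\big[d_{\GhatA\llbracket T\rrbracket}(\psi(u),\psi(v))\big]
  &\le \delta_u+\dist(\nu)\,d_{G[A]}(a_u,a_v)+\delta_v\\
  &\le \dist(\nu)\big(\delta_u+d_{G[A]}(a_u,a_v)+\delta_v\big)
   = \dist(\nu)\cdot d_{\GhatA}(\psi(u),\psi(v)),
\end{align*}
which is the desired pointwise bound; the degenerate cases ($u$ or $v$ in $A$, or $\psi(u),\psi(v)$ in a common piece) are immediate, again because $\dist(\nu)\ge1$. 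Averaging over $\psi\sim\mu$ and then over $u\ne v$ completes the argument. The one step that genuinely requires care — and essentially the only place where anything happens — is the $1$-sum shortest-path decomposition together with the observation that the pieces, hence $\delta_u,\delta_v$, are shared between $\GhatA$ and $\GhatA\llbracket T\rrbracket$; everything else is bookkeeping, which is why the statement is ``straightforward.''
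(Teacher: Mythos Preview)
Your proposal is correct and is precisely the natural unpacking that the paper intends when it declares the claim ``straightforward'' without giving a proof: condition on the selector map, use the $1$-sum structure to see that passing from $\GhatA$ to $\GhatA\llbracket T\rrbracket$ affects only the center term $d_{G[A]}(a_u,a_v)\mapsto d_T(a_u,a_v)$ while the in-piece distances $\delta_u,\delta_v$ are unchanged, bound the expected center term by $\dist(\nu)\cdot d_{G[A]}(a_u,a_v)$ via $d_G\le d_{G[A]}$, absorb $\delta_u,\delta_v$ using $\dist(\nu)\ge 1$, and then average over $\psi$. All the edge cases you flag (both images in one piece, one or both vertices in $A$) are handled correctly.
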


\subsection{Approximation by OS-instances}
\label{SEC: proof of main result}

Let us now show that every terminated metric plane graph $G$ with $\gamma=\gamma(G;\Term(G))$ admits
a stochastic terminal embedding into OS-instances.
In Section~\ref{SEC: embedding into trees}, we recall how OS-instances
can be stochastically embedded into metric trees, thereby completing the proof
of Theorem \ref{THM: UB embedding into trees}.

Let $F_1,\ldots,F_{\gamma}$ be faces of $G$ that cover $\Term(G)$,
and denote $T_i \eqdef V(F_i) \cap \Term(G)$.
For each $i \geq 1$, fix an arbitrary vertex $v_i \in V(F_i)$.
Denote $r \eqdef v_1$, and for each $i \geq 2$, let $P_i$ be the shortest path from $v_i$ to $r$.
Finally, let $P$ be the tree obtained as the union of these paths,
namely, the induced graph $G[\cup_{i \geq 2} P_i]$.

   We present now Klein's Tree-Cut
   operation \cite{Klein06}. It takes
   as input a plane graph $G$ and a tree $T$ in $G$, and ``cuts open''
   the tree to create a new face $F_{new}$. More specifically, consider walking
   ``around'' the tree and creating a new copy of each vertex and edge of $T$
   encountered along the way. This operation maintains planarity while
   replacing the tree $T$ with a simple cycle $C_T$ that bounds the new
   face. It is easy to verify that $C_T$ has two copies of
   every edge of $T$, and $\deg_T(v)$ copies of every vertex of $T$, where
   $\deg_T(v)$ stands for the degree of $v$ in $T$. This Tree-Cut operation can
   also be found in~\cite{Cora:Thesis,BKK07,BKM09}.

   We apply Klein's Tree-Cut operation to $G$
   and the tree $P$, and let $G_1$ be the resulting metric plane graph
   with the new face $F_{new}$, after we replace $P$ with a simple cycle $C_P$; see
   Figure~\ref{FGR: Klein Tree cut} for illustration. Since $P$ shares at least
   one vertex with each face $F_i$ in $G$ (namely, $v_i$), the
   cycle $C_P$ shares at least one vertex with each face $F_i$ in $G_1$.

   We now
   construct $G_2$ by applying two operations on $G_1$. First, for every face
   $F_i$ that shares exactly one vertex with $C_P$, namely only $v_i$
   (or actually a copy of it), we split this vertex into two
   as follows. Let $N^1_{G_1}(v_i)$ be all the neighbors of
   $v_i$ in $G_1$ embedded between the face $F_i$ and $F_{new}$ on one side,
   and $N^2_{G_1}(v_i)$ be all its neighbors on the other side. We split $v_i$
   into two vertices $v'_i,v''_i$ that are connected by an edge of length $0$,
   and connect all the vertices in $N^1_{G_1}(v_i)$ to $v'_i$ and all the vertices in $N^2_{G_1}(v_i)$ to $v''_i$.
   See Figure~\ref{FGR: add an edge between faces} for illustration.
   Notice that
   this new edge $\{v'_i,v''_i\}$ is incident to both $F_i$ and $F_{new}$, and
   that this operation maintains the planarity, along with the distance metric of
   $G_1$ (in the straightforward sense, where one takes a quotient by vertices at distance $0$
   from each other).

   The second operation adds between all the copies of the same $v\in
   V(P)$ a star with edge length $0$ drawn inside $F_{new}$. Note that adding
   the stars inside $F_{new}$ does not violate the planarity since all the
   copies of the vertices in $C_P$ are ordered by the walk around $P$; see
   Figure~\ref{FGR: Klein Tree cut} for illustration. It is
   easy to verify that if we identify each $v\in V(P)$ with one of its copies
   in $G_2$ arbitrarily then
   \begin{equation}\label{EQ: G and G2 same metric}
       \forall x,y\in V(G), \qquad d_G(x,y)=d_{G_2}(x,y).
   \end{equation}

\ifprocs
 \begin{figure*}[ht]
    \centering
    \includegraphics[angle=0,width=1\textwidth]{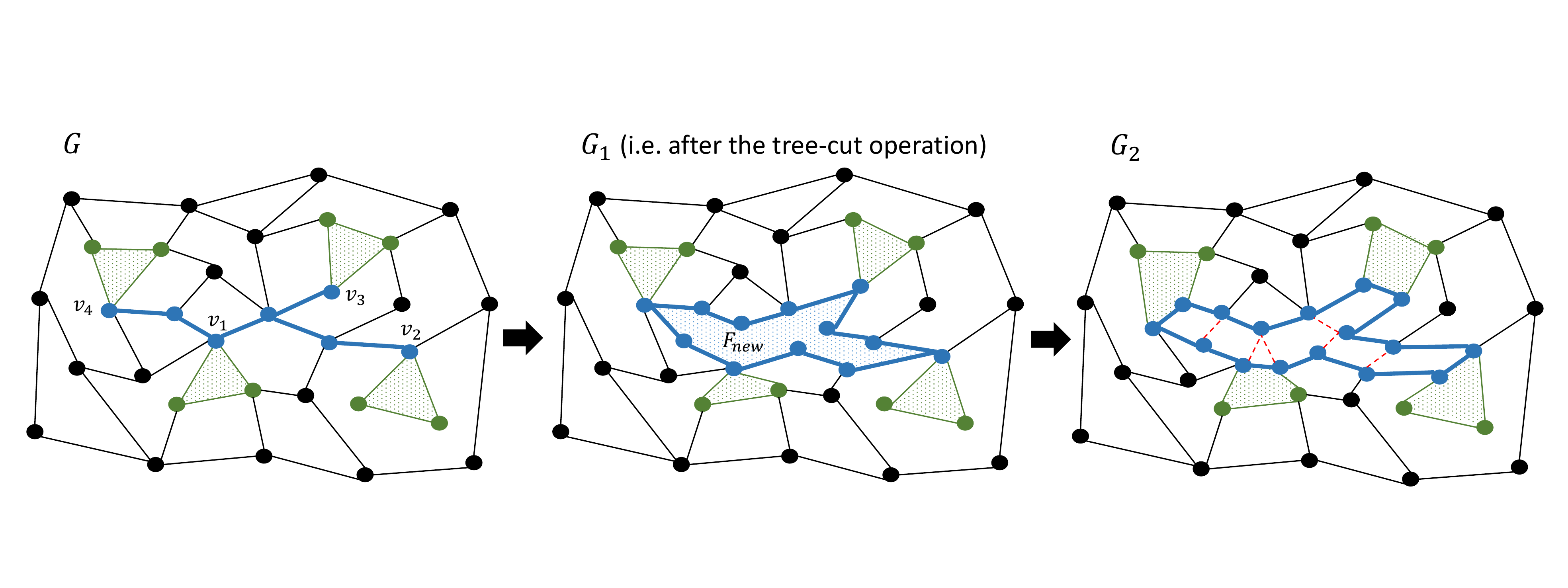}
    \caption{In $G$, the tree $P$ (in blue) is incident to all $\gamma=4$ special faces (drawn in green).
$G_1$ is obtained by the tree-cut operation on $P$, which creates a new face $F_{new}$.
Finally, $G_2$ is obtained by duplicating some vertices on $F_{new}$
and connecting copies of the same vertex by zero edges (star in dashed red).
}
    \label{FGR: Klein Tree cut}
    \mbox{}\hrule
   \end{figure*}

   \begin{figure*}[ht]
    \centering
    \includegraphics[angle=0,width=0.8\textwidth]{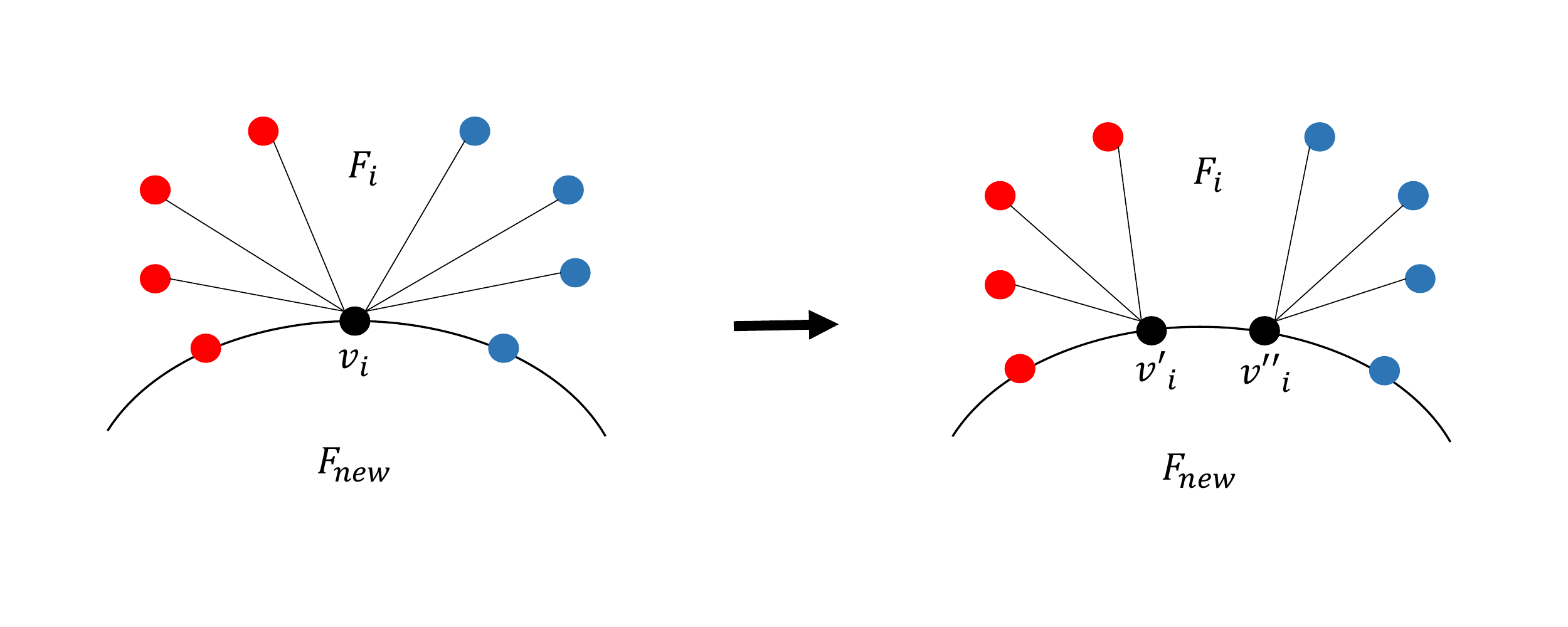}
    \caption{The neighbors of $v_i$ are partitioned into two sets (colored red and blue) by going around $v_i$ in the plane and watching for the location of faces $F_i$ and $F_{new}$, to eventually split $v_i$ into two.
}
    \label{FGR: add an edge between faces}
    \mbox{}\hrule
   \end{figure*}

 \else
 \begin{figure}
    \centering
    \includegraphics[angle=0,width=1\textwidth]{GraphConstruction}
    \caption{In $G$, the tree $P$ (in blue) is incident to all $\gamma=4$ distinguished special faces (drawn in green).
$G_1$ is obtained by applying the Tree-Cut operation on $G$ and $P$, which creates a new face $F_{new}$.
Finally, $G_2$ is obtained by duplicating some vertices on $F_{new}$
and connecting copies of the same vertex by length-zero edges (the dashed red edges).
}
    \label{FGR: Klein Tree cut}
    \mbox{}\hrule
   \end{figure}

   \begin{figure}
    \centering
    \includegraphics[angle=0,width=0.8\textwidth]{AddEdgeBetweenFaces}
    \caption{The neighbors of $v_i$ are partitioned into two sets (colored red and blue) by going around $v_i$ in the plane and watching for the location of faces $F_i$ and $F_{new}$, to eventually split $v_i$ into two.
}
    \label{FGR: add an edge between faces}
    \mbox{}\hrule
   \end{figure}

\fi

   \begin{lemma}\label{LMA: embed V(P) to tree distortion log gamma}
      $(V(P),d_{G})$ admits a stochastic embedding into $\Trees(V(P))$ with expected stretch at most $O(\log \gamma)$.
   \end{lemma}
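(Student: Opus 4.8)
The plan is to apply Theorem~\ref{THM: log k Tasos} essentially verbatim. Recall that $P$ was defined as the induced graph $G[\cup_{i\geq 2}P_i]$, where for each $2\le i\le\gamma$ the path $P_i$ is a shortest path in the metric graph $G$ from $v_i$ to the common root $r=v_1$. Thus the only facts we need are: (i) each $P_i$ is a shortest path in $G$ (with respect to $d_G$); (ii) the $\gamma-1$ paths $P_2,\ldots,P_\gamma$ all share the endpoint $r$; and (iii) $V(P)=\bigcup_{i=2}^{\gamma}V(P_i)$. These are exactly the hypotheses of Theorem~\ref{THM: log k Tasos} with $m=\gamma-1$.

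Concretely, I would invoke Theorem~\ref{THM: log k Tasos} with the collection $P_2,\ldots,P_\gamma$ to obtain a stochastic embedding of the metric space $\big(\bigcup_{i=2}^{\gamma}V(P_i),\,d_G\big)=(V(P),d_G)$ into $\Trees(V(P))$ with expected stretch $O(\log(\gamma-1))=O(\log\gamma)$. Here we use that $\gamma\ge 2$ throughout this section; in the boundary case $\gamma=2$ there is a single path $P_2$, which is a geodesic in $G$ and hence already isometric to an element of $\Trees(V(P))$, giving stretch $1$, consistent with the claimed bound. This is the entire argument.

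The point worth emphasizing — and the reason this lemma is not a triviality despite its one-line proof — is that the metric being approximated is $d_G$ restricted to $V(P)$, not the intrinsic shortest-path metric $d_{G[P]}$ of the subgraph $P$. These two can differ substantially: two paths $P_i,P_j$ may run close together near $r$, or a ``shortcut'' of $G$ lying outside $P$ may bring a vertex of $P_i$ much nearer to a vertex of $P_j$ than any route within $P$ does, so $P$ with its own path metric need not be a faithful model. Hence one cannot simply use the elementary fact that a (topological) tree with its intrinsic path metric embeds isometrically into a metric tree; the content of Theorem~\ref{THM: log k Tasos} is precisely that it controls the genuine $d_G$-distances among the path vertices. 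Given that theorem, however, there is no further obstacle: matching the setup to its hypotheses immediately yields the lemma.
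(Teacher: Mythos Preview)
Your proof is correct and is essentially identical to the paper's own argument: the paper simply writes ``Apply Theorem~\ref{THM: log k Tasos} on the shortest-paths $P_2,\ldots,P_{\gamma}$ in $G$, with shared vertex $v_1=r$.'' Your additional remarks on the boundary case $\gamma=2$ and on the distinction between $d_G$ and $d_{G[P]}$ are accurate elaborations that the paper omits.
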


   \begin{proof}
    Apply Theorem~\ref{THM: log k Tasos} on the shortest-paths $P_2,\ldots,P_{\gamma}$ in $G$, with shared vertex $v_1=r$.
   \end{proof}

   Let $A \subseteq V(G_2)$ denote all the vertices on the boundary of $F_{new}$ in $G_2$.
   To every $T \in \Trees(V(P))$, we can associate a tree $T' \in \Trees(A)$ by identifying
   $x \in V(P)$ with one of its copies in $A$, and attaching the rest of its copies to $x$
   with an edge of length $0$.  Using \eqref{EQ: G and G2 same metric} in conjunction with Lemma \ref{LMA: embed V(P) to tree distortion log gamma}
   yields the following.

   \begin{corollary}\label{COR: A tree log gamma}
      $(A, d_{G_2})$ admits a stochastic embedding into $\Trees(A)$ with expected stretch at most $O(\log \gamma)$.
   \end{corollary}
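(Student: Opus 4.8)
The plan is to transport the stochastic embedding of $(V(P),d_{G})$ from Lemma~\ref{LMA: embed V(P) to tree distortion log gamma} to $(A,d_{G_2})$ almost verbatim, exploiting the fact that passing from $G$ to $G_2$ only \emph{duplicates} vertices of $V(P)$ along zero-length edges and, by \eqref{EQ: G and G2 same metric}, leaves the metric on $V(G)$ unchanged. First I would record two structural facts about $G_2$. (i) Every vertex of $A$ (the boundary of $F_{new}$ in $G_2$) is a copy of some vertex of $V(P)$: in $G_1$ the boundary cycle $C_P$ of $F_{new}$ consists solely of copies of $V(P)$, and the two operations producing $G_2$ either split such a copy $v_i$ into two copies $v_i',v_i''$ joined by a length-$0$ edge, or attach further copies via the length-$0$ stars drawn inside $F_{new}$. (ii) For each $v\in V(P)$, all copies of $v$ in $G_2$ are pairwise at distance $0$ in $d_{G_2}$, since the split edges together with the stars connect them by length-$0$ paths. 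Combining (i), (ii), and \eqref{EQ: G and G2 same metric}: if $a,b\in A$ are copies of $u,v\in V(P)$ respectively, then $d_{G_2}(a,b)=d_{G}(u,v)$.

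Next I would unwind the correspondence $T\mapsto T'$ already described just before the corollary. Let $\nu$ be the stochastic embedding of $(V(P),d_{G})$ into $\Trees(V(P))$ from Lemma~\ref{LMA: embed V(P) to tree distortion log gamma}, with $\dist(\nu)=O(\log\gamma)$; by relabeling we may take $\nu$ supported on pairs $(\id,T)$. Given such a $T$, form $T'\in\Trees(A)$ by fixing for each $v\in V(P)$ an identification of $v$ with one copy of $v$ in $A$ and attaching every remaining copy of $v$ to it by a length-$0$ edge, and let $\varphi_T\colon A\to V(T')$ send each copy of $v$ to the corresponding vertex of $T'$. Note $d_{T'}(\varphi_T(a),\varphi_T(b))=d_{T}(u,v)$ for $a,b$ copies of $u,v$, since the new length-$0$ pendant edges create no shortcuts. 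Pushing $\nu$ forward along $T\mapsto(\varphi_T,T')$ yields a distribution $\nu'$ over $\Trees(A)$, and this is the claimed embedding.

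It then remains to verify the two required properties. For non-contraction, take $a,b\in A$, copies of $u,v\in V(P)$: $d_{T'}(\varphi_T(a),\varphi_T(b))=d_{T}(u,v)\ge d_{G}(u,v)=d_{G_2}(a,b)$, where the middle inequality is the non-contraction of $\nu$ and the last equality is the consequence of (i), (ii), and \eqref{EQ: G and G2 same metric} noted above. For the expected stretch, the same identity gives $\E_{\nu}[d_{T'}(\varphi_T(a),\varphi_T(b))]=\E_{\nu}[d_{T}(u,v)]\le \dist(\nu)\cdot d_{G}(u,v)=O(\log\gamma)\cdot d_{G_2}(a,b)$ whenever $u\ne v$; pairs with $u=v$ satisfy $d_{G_2}(a,b)=0$ and $d_{T'}(\varphi_T(a),\varphi_T(b))=0$ identically, so they impose no constraint in \eqref{EQ: exp distortion graph embedding}. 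Hence $\dist(\nu')=O(\log\gamma)$.

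Given the construction this corollary is nearly immediate, so I do not expect a genuine obstacle; the only point that needs a careful look is fact (ii) — checking that the vertex splits together with the zero-length stars in $F_{new}$ really connect \emph{all} copies of a given $v\in V(P)$ by length-$0$ paths — and, relatedly, the cosmetic issue that the expected-stretch definition \eqref{EQ: exp distortion graph embedding} must be read as quantifying over pairs at positive distance (equivalently, one passes to the metric quotient along the length-$0$ edges), which is harmless here because the tree and graph distances vanish simultaneously on the degenerate pairs.
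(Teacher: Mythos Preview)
Your proposal is correct and follows essentially the same approach as the paper: the paper's justification (given in the paragraph immediately preceding the corollary) is precisely to take the embedding $\nu$ of $(V(P),d_G)$ from Lemma~\ref{LMA: embed V(P) to tree distortion log gamma}, extend each tree $T\in\Trees(V(P))$ to a tree $T'\in\Trees(A)$ by attaching the extra copies of each $v\in V(P)$ with length-$0$ edges, and invoke \eqref{EQ: G and G2 same metric}. Your write-up simply unpacks this with the explicit verification of non-contraction and stretch, and with the observation that degenerate (distance-$0$) pairs are harmless; there is no substantive difference.
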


   Let $H$ be the graph obtained from $G_2$ by adding an edge $\{u,v\}$ of length $d_G(u,v)$ between every pair of vertices $u,v\in A$.
   By construction, we have $\dil_{H}(A)=1$. Let $E' \eqdef E(H)\setminus E(H[A])$, and $H'=(V(H),E')$.
   While $H$ is in general non-planar,
   the graph $H'$ and $\HAa$ for $a\in A$ are subgraphs of the planar graph $G_2$, and are thus planar as well, and by Theorem~\ref{THM: beta=1 for planar graphs}
   we have $\beta_{(V(H),d_{H'})} \leq O(1)$.

   By applying the Peeling Lemma (Lemma~\ref{LMA: the peeling lemma}) to $H$ and $A\subseteq V(H)$,
   we obtain a stochastic embedding $\mu$ of $H$ into $\HhatA$ such that $\mu$ is supported on selector maps
   and $\dist(\mu) \leq O(1)$.
    Using Corollary~\ref{COR: A tree log gamma} and the fact that
    $(A,d_{H})$ is the same as $(A,d_{G_2})$, we obtain a stochastic
    embedding $\nu$ of $(A,d_{H})$ into $\Trees(A)$ with $\dist(\nu) \leq
    O(\log \gamma)$.

Define $\Term(H)$ to be the set of vertices in $\Term(G)$ together with all their copies created in the construction of $H$, and
\[
   \Term(\HhatA) \eqdef \{ \hat{a} : a \in \Term(H) \} \cup \{ (v,a) : v \in \Term(H), a\in A \}\,.
\]
By convention, for any subgraph $H'$ of $H$
we have $\Term(H') \eqdef V(H') \cap \Term(H)$.

Applying the Composition Lemma (Lemma~\ref{LMA: the composition lemma})
to the pair $\mu,\nu$ (in conjunction with Lemma~\ref{lem:Gbrack})
yields a stochastic embedding
$\pi \eqdef \nu \circ \mu$ satisfying the next lemma.

\begin{lemma}\label{LMA: G to W distortion log gamma}
      $(V(G),d_G)$ admits a stochastic embedding $\pi$ into the family of metric graphs that are $1$-sums of a metric tree
      with the graphs $\{\HAa : a \in A\}$, where $\HAa$ is glued to $T$ along a vertex of $\Term(\HAa)$,
      and such that $\dist(\pi) \leq O(\log \gamma)$.
      Moreover, every $(\f, W) \in \supp(\pi)$ satisfies $\f(\Term(G)) \subseteq \Term(W)$.
\end{lemma}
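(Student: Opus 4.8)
The plan is to observe that this lemma is essentially just the conclusion of the Composition Lemma applied to the pair $(\mu,\nu)$ constructed immediately above, so the real work reduces to reading off the stretch bound, identifying the family of graphs into which $\pi$ lands, and checking the two terminal conditions, plus one extra step: restricting the embedding from $H$ down to $G$. First I would record the distortion. By the Composition Lemma (Lemma~\ref{LMA: the composition lemma}), together with Lemma~\ref{lem:Gbrack}, $\pi = \nu\circ\mu$ is a stochastic embedding of $H$ into the family $\cF_{H,A} = \{\HhatA\llbracket T\rrbracket : T\in\Trees(A)\}$, and
\[
\dist(\pi) \le \dist(\nu)\cdot\dist(\mu) \le O(\log\gamma)\cdot O(1) = O(\log\gamma)\,.
\]
Next, by Lemma~\ref{lem:Gbrack}, every graph $W$ in $\cF_{H,A}$ is a $1$-sum of a metric tree $T\in\Trees(A)$ with the graphs $\{\HAa : a\in A\}$, where $\HAa$ is attached to $T$ at the copy $\hat a$ of $a\in A$. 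I would arrange that $A\subseteq\Term(H)$ — this is harmless, since enlarging $\Term(H)$ does not change $\dist(\pi)$, and both $\mu$ and $\nu$ are non-contracting on all vertices, not only on terminals — so that $\hat a\in\Term(\HAa)$ and the attachment is along a terminal of $\HAa$, as asserted.

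For the terminal property of $\pi$, I would unwind the definitions: $\Term(\HhatA)=\{\hat a : a\in\Term(H)\}\cup\{(v,a):v\in\Term(H),\ a\in A\}$, terminals are inherited both under the $1$-sum and under replacing $\HhatA[\hat A]$ by $\hat T$, and a selector map $\psi$ sends each $a\in A$ to $\hat a$ and each $v\in V(H)\setminus A$ to one of its labelled copies $(v,a)$. Combining these shows $\psi(\Term(H))\subseteq\Term(W)$ for every $(\psi,W)\in\supp(\pi)$; since $\Term(G)\subseteq\Term(H)$, this yields $\psi(\Term(G))\subseteq\Term(W)$.

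Finally I would descend from $H$ to $G$. The edges added in forming $H$ have length $d_G(u,v)=d_{G_2}(u,v)$ and hence introduce no shortcuts, so $d_H$ agrees with $d_{G_2}$ on $V(G_2)$; combined with \eqref{EQ: G and G2 same metric} and the fixed identification of each $v\in V(P)$ with one of its copies, this realizes $(V(G),d_G)$ as an isometric subspace of $(V(H),d_H)$ with $\Term(G)\subseteq\Term(H)$. Restricting each $\psi$ in $\supp(\pi)$ to $V(G)$ then preserves non-contraction and the terminal property, and can only decrease the expected stretch (the maximum is now over a subset of the vertex pairs), which gives precisely the embedding in the statement. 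The step I expect to require the most care is the terminal bookkeeping in the two middle parts: keeping the meaning of ``terminal'' mutually consistent across $H$, $\HhatA$, and the members of $\cF_{H,A}$, so that both the ``$1$-sum glued along a terminal'' structure and the conclusion $\f(\Term(G))\subseteq\Term(W)$ drop out simultaneously — together with checking that the added edges of $H$ genuinely leave the metric on $V(G_2)$ unchanged, so that the restriction to $G$ is an honest isometric embedding and not merely a non-expansive one.
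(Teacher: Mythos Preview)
Your proposal is correct and follows exactly the paper's approach: the paper's entire argument for this lemma is the single sentence preceding it, namely that applying the Composition Lemma to the pair $(\mu,\nu)$, together with Lemma~\ref{lem:Gbrack}, yields $\pi=\nu\circ\mu$ with the stated properties. You have simply unpacked the details the paper leaves implicit --- the restriction from $H$ back to $(V(G),d_G)$ via the isometric identification in \eqref{EQ: G and G2 same metric}, the terminal bookkeeping through selector maps, and the observation (which the paper does not make explicit) that one may take $A\subseteq\Term(H)$ so that the gluing vertex $a$ is indeed in $\Term(\HAa)$.
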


It remains to prove that $\pi$ in this lemma is an embedding into OS-instances,
i.e., every $1$-sum in the support of $\pi$ is an OS-instance.
We first show this for every pair $\{ (\HAa, \Term(\HAa)) : a \in A\}$.

\begin{lemma}\label{lem:goodface}
       For every $a\in A$, there is a face $F_a$ in $\HAa$ such that $\Term(\HAa)\subseteq V(F_a)$.
    \end{lemma}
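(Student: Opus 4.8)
Recall that $\HAa = H[(V(H)\setminus A) \cup \{a\}]$, which by construction is a subgraph of the planar graph $G_2$ (the edges of $H$ outside of $H[A]$ are exactly those of $G_2$, together with the split-vertex and star edges of length $0$; all of these lie in $G_2$). So $\HAa$ inherits the plane drawing from $G_2$, and the task is to exhibit a single face of this drawing that contains all of $\Term(\HAa) = V(\HAa) \cap \Term(H)$. The natural candidate is the face obtained from $F_{new}$: in $G_2$, every vertex of $A$ lies on the boundary of $F_{new}$, and we are deleting from $G_2$ precisely the vertices $A \setminus \{a\}$ (and the edges of $H[A]$, which are not in $G_2$ anyway). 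Deleting those vertices merges $F_{new}$ with various adjacent faces into one large face $F_a$, and $a$ still lies on its boundary. The crux is then to argue that $\Term(\HAa) \subseteq V(F_a)$.

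**Key steps, in order.** First I would set up the drawing: take the plane drawing of $G_2$, and observe $\HAa$ is the subgraph induced on $(V(G_2) \setminus A) \cup \{a\}$ with the inherited drawing, so its faces are well-defined. Second, I would identify $F_a$ as the face of $\HAa$ that contains the region of $F_{new}$ (this region survives since $F_{new}$'s interior is disjoint from all vertices and is only possibly merged with neighboring faces upon deleting $A \setminus \{a\}$). Third — the heart of the argument — I would show every terminal of $\HAa$ lies on $\partial F_a$. Here is where the construction of $G$, $G_1$, $G_2$, and $H$ pays off: the terminals of $H$ are the terminals of $G$ together with their copies, and all original terminals $\Term(G) \subseteq \cup_i T_i$ lie on the faces $F_1, \dots, F_\gamma$. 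After the Tree-Cut and the vertex-splitting/star operations, each $F_i$ shares at least one vertex with $C_P$ (hence with $\partial F_{new}$), and the splitting step ensures that whenever $F_i$ met $C_P$ in just the single vertex $v_i$ we inserted a length-$0$ edge making $F_i$ and $F_{new}$ share an actual edge. So in $G_2$ the faces $F_1, \dots, F_\gamma$ and $F_{new}$ form a connected "chain" glued along shared boundary vertices/edges, all of whose boundary vertices that are terminals survive in $\HAa$ (terminals are never in $A$ unless they are the retained vertex $a$, because... — and this is the subtle point — one must check $\Term(H) \cap A$ behaves correctly; terminals on $F_{new}$'s boundary that got duplicated are all identified by length-$0$ stars, so their copies are at distance $0$ and it suffices to keep one). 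Deleting $A \setminus \{a\}$ collapses this whole chain of faces into the single face $F_a$, and every terminal of $\HAa$ — being on the boundary of some $F_i$ or of $F_{new}$ in $G_2$, and not itself deleted — ends up on $\partial F_a$.

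**Main obstacle.** The delicate part is bookkeeping the terminals that lie in $A$ itself and their copies. By definition $\Term(\HAa) = V(\HAa) \cap \Term(H)$, and $V(\HAa) \cap A = \{a\}$, so a terminal in $A$ only "counts" for $\HAa$ if it equals $a$ — but one must be careful that the original terminals of $G$ that happen to lie on the tree $P$ (and hence got duplicated onto $C_P \subseteq \partial F_{new}$, i.e., into $A$) do not get "lost": their relevant copy is whichever one we keep, and the length-$0$ star identifications plus \eqref{EQ: G and G2 same metric} guarantee that the terminal set is faithfully represented. I expect the cleanest route is to argue purely topologically about faces merging under vertex deletion, and separately invoke the already-established metric identity to handle the length-$0$ contractions, rather than trying to track individual copies combinatorially. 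A secondary nuisance is making precise that "the region of $F_{new}$ survives as a single face and all the $F_i$ merge into it" — this is visually obvious from Figures~\ref{FGR: Klein Tree cut} and~\ref{FGR: add an edge between faces}, and I would phrase it via the standard fact that deleting a vertex from a plane graph merges exactly the faces incident to that vertex into one.
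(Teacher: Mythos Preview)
Your approach is the same as the paper's, and the ingredients you list are the right ones: pass to $G_2$, delete $A\setminus\{a\}$, and let $F_a$ be the face containing the old region of $F_{new}$. However, you have misidentified the ``main obstacle.'' The terminal bookkeeping you worry about is a non-issue: by definition $\Term(\HAa)=V(\HAa)\cap\Term(H)$ and $V(\HAa)\cap A=\{a\}$, so terminals in $A\setminus\{a\}$ are simply absent from $\HAa$ and there is nothing to track. The metric identity \eqref{EQ: G and G2 same metric} and the length-$0$ stars play no role in this lemma, which is purely topological.

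The two points that actually require care (and that your sketch elides) are these. First, why do all the relevant faces merge into \emph{one} face rather than several? The paper's answer: $G_2[A]$ is a simple cycle, so $G_2[A\setminus\{a\}]$ is a connected path; deleting a connected vertex set merges all incident faces into a single face $F^a_{new}$. Your ``connected chain of faces'' intuition is pointing at this, but the clean statement is about connectedness of the deleted vertex set. Second, why is each $F_i$ absorbed into $F^a_{new}$? Because each $F_i$ is incident to at least \emph{two} vertices of $A$ (this is exactly the purpose of the vertex-splitting step), hence to at least one vertex of $A\setminus\{a\}$; you note the splitting creates a shared edge, but you should state explicitly the consequence ``$\geq 2$ vertices on $A$, hence $\geq 1$ survives removal of $a$.'' Finally, don't forget to argue that $a$ itself lies on $F^a_{new}$: since $a$ has a neighbor $b\in A$, some face of $G_2$ is incident to both, and that face is absorbed when $b$ is deleted.
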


    \begin{proof}
      Fix $a\in A$. The graph $G_2$ is planar, and while $H$ need not be planar,
      the subgraphs $G_2[(V(G_2)\setminus A) \cup \{a\}]$ and $\HAa$ are identical for each $a \in A$.
      Thus, it suffices to prove the lemma for the subgraphs $G_2[(V(G_2) \setminus A)\cup \{a\}]$.

      Observe that if we remove from $G_2$ a vertex $v\in V(G_2)$, then all the
   faces incident to $v$ in $G_2$ become one new face in the graph
$G_2\setminus \{v\}$. Moreover, if we remove from $G_2$ both endpoints of an
edge $\{u,v\}$, then all the faces incident to either $u$ or $v$ become one new
face in $G_2\setminus\{u,v\}$. Recall that $G_2[A]$ is a simple cycle (bounding
$F_{new}$), thus $G_2[A\setminus \{a\}]=G_2[A]\setminus \{a\}$ is connected,
and all the faces incident to at least one vertex in $A\setminus\{a\}$ become
one new face in $G_2[(V(G_2)\setminus A)\cup\{a\}]$, which we denote $F^a_{new}$.

By construction of $G_2$ (which splits a vertex of $G_1$ if
it is the only vertex incident to both $F_i$ and $F_{new}$), every face $F_i$
is incident to at least two vertices in $A$, and thus to at least one in
$A\setminus \{a\}$. It follows that all the terminals in $G_2[(V(G_2)\setminus A)\cup\{a\}]$ are on the same face $F^a_{new}$. In addition, since $a$ has at
least one neighboring vertex $b \in A$, at least one face is incident to both
$a$ and $b$ in $G_2$, and it becomes part of the face $F^a_{new}$ in
$G_2[(V(G_2)\setminus A) \cup \{a\}]$. Therefore, $a \in V(F^a_{new})$ as well,
and the lemma follows.
\end{proof}

\begin{lemma}\label{LMA: every face can be the outerface}
   Suppose $W$ is a planar graph formed from the $1$-sum of a tree $T$
   and a collection of (pairwise disjoint) plane graphs $\{H_a : a \in A\}$, where each $H_a$ has a distinguished face $F_a$,
   and $H_a$ is glued to $T$ along a vertex of $V(F_a)$.
   Then there exists a drawing of $W$ in which all the vertices $V(T) \cup \bigcup_{a \in A} V(F_a)$ lie on the outer face.
\end{lemma}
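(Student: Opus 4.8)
The plan is to build the desired drawing of $W$ by dropping each $H_a$ into the (unique) face of a planar drawing of $T$, one vertex of $T$ at a time, while maintaining a single ``global'' face that contains all the vertices we care about. At the end we simply flip this global face to the outer face.

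First I would record two standard facts about plane graphs. \textbf{(a)} Any face of a plane graph can be taken to be the outer face: draw the graph on the sphere, pick an interior point of the chosen face, and apply a self-homeomorphism of the sphere carrying that point to the point at infinity. In particular, for each $a \in A$ we may fix a drawing of $H_a$ in which $F_a$ is the outer (unbounded) face; then the glue vertex $v_a \in V(F_a)$ lies on the boundary of the outer face of this drawing. \textbf{(b)} If $G_1$ is a plane graph with a vertex $u$ lying on a face $\phi_1$, and $G_2$ is a plane graph with a vertex $w$ lying on its outer face $\phi_2$, then the $1$-sum of $G_1$ and $G_2$ that identifies $u$ with $w$ admits a plane drawing with a face $\phi^\ast$ such that $V(\phi^\ast) \supseteq V(\phi_1) \cup V(\phi_2)$. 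Indeed, shrink the drawing of $G_2$ so that it fits inside a tiny disk contained in a corner of $\phi_1$ at $u$, positioned so that $w$ coincides with $u$ and the unbounded face $\phi_2$ of $G_2$ opens into $\phi_1$; since $\phi_2$ is the outer face, the whole copy of $G_2$ is ``surrounded'' by $\phi_2$, so this local region merges with the rest of $\phi_1$ into one face $\phi^\ast$, whose boundary walk is obtained from that of $\phi_1$ by making a detour, at $u$, around the outer boundary walk of $G_2$. Hence $V(\phi^\ast)$ contains every vertex on the boundary of $\phi_1$ and every vertex on the boundary of $\phi_2$.

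With these in hand, I would argue by induction on $|A|$. A tree $T$ drawn in the plane has exactly one face $\Phi_0$, and every vertex of $T$ lies on $\Phi_0$; this is the base case. For the inductive step, suppose $H_{a_1},\ldots,H_{a_{j-1}}$ have already been glued on, yielding a plane graph $W_{j-1}$ with a face $F^{(j-1)}$ containing $V(T) \cup \bigcup_{i<j} V(F_{a_i})$; in particular $a_j \in V(T) \subseteq V(F^{(j-1)})$. Apply fact (b) with $G_1 = W_{j-1}$, $u = a_j$, $\phi_1 = F^{(j-1)}$ and $G_2 = H_{a_j}$, $w = v_{a_j}$, $\phi_2 = F_{a_j}$ (the outer face, by fact (a)): this produces $W_j$ with a face $F^{(j)}$ satisfying $V(F^{(j)}) \supseteq V(F^{(j-1)}) \cup V(F_{a_j}) \supseteq V(T) \cup \bigcup_{i \le j} V(F_{a_i})$, which restores the invariant. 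Since the different $H_{a}$'s are pairwise disjoint and attached at distinct vertices of $T$, the local modifications at the distinct glue points do not interfere. After $|A|$ steps the inductively built graph is precisely $W$ (its vertex set is $V(T) \sqcup \bigsqcup_{a\in A}(V(H_a)\setminus\{v_a\})$), equipped with a face $F^\ast$ containing $V(T) \cup \bigcup_{a\in A} V(F_a)$. Redrawing $W$ so that $F^\ast$ becomes the outer face (fact (a) once more) puts all the required vertices on the outer face.

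The step needing the most care is fact (b): one must check that when the shrunk copy of $H_a$ is dropped into a single corner of the current global face, the old outer face $F_a$ of $H_a$ genuinely fuses with that global face and contributes \emph{all} of $V(F_a)$, not merely the portion of $\partial F_a$ near $v_a$ — and that this remains true even when $\partial F_a$ is not a simple cycle or $H_a$ is disconnected. What makes this go through cleanly is precisely the insistence (via fact (a)) that $F_a$ be the \emph{outer} face of the drawing of $H_a$: then the entire copy of $H_a$ sits inside $F_a$, so once it is placed into one corner of the global face, that corner and all of $F_a$ become one region, and its boundary walk visits every vertex of $V(F_a)$.
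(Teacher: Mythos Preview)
Your proof is correct and follows essentially the same approach as the paper's: both first invoke the standard fact that any face can be made the outer face, draw $T$ (whose single face contains all its vertices), and then drop each $H_a$ into that face with $F_a$ facing outward so that $V(F_a)$ joins the global face. The paper compresses this into two sentences, while you spell out the gluing step (your fact~(b)) and organize it as an induction on $|A|$; your extra care about non-simple boundaries and the remark on distinct glue points are not needed for the argument but do no harm.
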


\begin{proof}
It is well-known that every plane graph can be redrawn so that any desired
face is the outer face (see, e.g., \cite[\S 9]{Whitney32}).
So we may first construct a planar drawing of $T$, and then extend
this to a planar drawing of $W$ where each $H_a$ is drawn so that $F_a$ bounds the image of $H_a$,
and the interior of $F_a$ contains only the images of vertices in $V(H_a)$.
\end{proof}

Combining Lemmas~\ref{LMA: G to W distortion log gamma},
\ref{lem:goodface} and~\ref{LMA: every face can be the outerface}
yields the following corollary.

\begin{corollary}\label{COR: terminals outerface in W}
  $G$ admits a stochastic embedding with expected stretch $O(\log \gamma)$
  into a family $\cF$ of terminated metric plane graphs,
  where each $W \in \cF$ satisfies $\gamma(W;\Term(W))=1$.
\end{corollary}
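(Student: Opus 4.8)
The plan is simply to combine the three preceding lemmas. Lemma~\ref{LMA: G to W distortion log gamma} already produces a stochastic embedding $\pi$ of $(V(G),d_G)$, with expected stretch $\dist(\pi)\le O(\log\gamma)$, into the family of metric graphs $W$ that are $1$-sums of a metric tree $T$ with the planar graphs $\{\HAa : a\in A\}$, where each $\HAa$ is glued to $T$ along a vertex of $\Term(\HAa)$, and such that every $(\f,W)\in\supp(\pi)$ satisfies $\f(\Term(G))\subseteq\Term(W)$. A $1$-sum of planar graphs is planar, and the non-contraction-on-terminals property and the terminals-to-terminals property of a stochastic terminal embedding carry over verbatim from $\pi$. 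Hence the only thing left to prove is that every such $W$ admits a plane drawing in which $\Term(W)$ lies on a single face, i.e. $\gamma(W;\Term(W))=1$; then $\cF$ is taken to be the family of these $1$-sums, each equipped with such a drawing.

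Fix one such $W$, assembled from $T$ and $\{\HAa : a\in A\}$. First I would invoke Lemma~\ref{lem:goodface}: for each $a\in A$ there is a face $F_a$ of $\HAa$ with $\Term(\HAa)\subseteq V(F_a)$, and moreover (as the proof of that lemma shows) the gluing vertex $a$ itself lies on $F_a$, so $\HAa$ is attached to $T$ along a vertex of $V(F_a)$. Next I would verify the inclusion $\Term(W)\subseteq V(T)\cup\bigcup_{a\in A}V(F_a)$: by the definition of $\Term(\HhatA)$ inherited by $W$, each terminal of $W$ is either a tree vertex $\hat a$ with $a\in\Term(H)$, hence an element of $V(T)$, or a copy $(v,a)$ of some $v\in\Term(H)$ lying inside $\HAa$, hence an element of $\Term(\HAa)\subseteq V(F_a)$. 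Finally I would apply Lemma~\ref{LMA: every face can be the outerface} to $W$, the tree $T$, and the collection $\{\HAa : a\in A\}$ with distinguished faces $\{F_a\}$: this yields a drawing of $W$ in which all of $V(T)\cup\bigcup_a V(F_a)$ — and therefore $\Term(W)$ — lies on the outer face, so $\gamma(W;\Term(W))=1$, as desired.

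The only mildly delicate point, and the one I would be most careful about, is the terminal bookkeeping: one must confirm that the terminal set $\Term(W)$ inherited along the Composition Lemma (Lemma~\ref{LMA: the composition lemma}) really is contained in $V(T)\cup\bigcup_{a}V(F_a)$, using both clauses in the description of $\Term(\HhatA)$ together with the fact from Lemma~\ref{lem:goodface} that $\Term(\HAa)$ — including the gluing vertex, when it happens to be a terminal — lies on $F_a$. Beyond this, the argument is a direct quotation of the three lemmas, and the $O(\log\gamma)$ stretch is exactly the bound furnished by Lemma~\ref{LMA: G to W distortion log gamma}.
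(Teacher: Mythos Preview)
Your proposal is correct and follows essentially the same approach as the paper, which simply states that the corollary is obtained by combining Lemmas~\ref{LMA: G to W distortion log gamma}, \ref{lem:goodface}, and~\ref{LMA: every face can be the outerface}. Your version is more explicit about the terminal bookkeeping (verifying $\Term(W)\subseteq V(T)\cup\bigcup_{a}V(F_a)$ and that the gluing vertex lies on $F_a$), which the paper leaves implicit, but the logical structure is identical.
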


Note that in the stochastic embedding of this corollary,
the stretch guarantee applies to all vertices (and not only to terminals),
and the choice of terminals restricts the host graphs $W\in\cF$,
as they are OS-instances.

\subsection{From OS-instances to random trees}
\label{SEC: embedding into trees}

We need a couple of known embedding theorems.

\begin{theorem}[{\cite[Thm. 5.4]{GNRS04}}]
   \label{thm:gnrs}
   Every metric outerplanar graph admits a stochastic embedding into metric trees with expected stretch $O(1)$.
\end{theorem}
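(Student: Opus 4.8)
The plan is to induct on the structure of the outerplanar graph, reducing everything to cycles, which are handled directly, and then gluing along cut vertices and chords. \emph{Reduction to the $2$-connected case.} If $G$ has a cut vertex it is the $1$-sum of its blocks; since a $1$-sum of trees is a tree and the shortest-path metric of $G$ restricted to any block is that block's own metric (a geodesic between two vertices of one block never leaves it), it suffices to embed each block and glue the sampled trees at the cut vertices, and one checks the resulting distribution dominates $d_G$ with expected stretch equal to the maximum over the blocks. So assume $G$ is $2$-connected; then it has a Hamiltonian outer cycle $C=v_1v_2\cdots v_nv_1$ and its chords are pairwise non-crossing. The base case $G=C$ is the classical cycle embedding: delete one edge $e$ with probability proportional to $\ell(e)$; the result is a path dominating $d_C$, and a short computation bounds the expected stretch by $2$.

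\emph{Inductive step.} Non-crossing chords make the weak dual of $G$ a tree, so I would pick a leaf face $f$, bounded by one chord $\{v_i,v_j\}$ together with an arc $v_iv_{i+1}\cdots v_j$ of $C$ of total length $L_{\mathrm{arc}}$. Let $G_2$ be $G$ with the interior vertices $v_{i+1},\dots,v_{j-1}$ deleted (so $\{v_i,v_j\}$ becomes a cycle edge and $G_2$ has one fewer chord), and let $\tilde G_2$ be $G_2$ with the length of $\{v_i,v_j\}$ lowered to $\min(\ell(v_iv_j),L_{\mathrm{arc}})$; one verifies $d_{\tilde G_2}$ equals $d_G$ restricted to $V(G_2)$. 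Recursively embed $\tilde G_2$ to obtain a sample tree $T_2$, then reattach the arc vertices by deleting one arc edge (with probability proportional to its length) and grafting the two resulting sub-paths as pendants onto $T_2$ at $v_i$ and at $v_j$. The output is a tree dominating $d_G$, and the expected stretch of any pair involving $f$ is a cycle-type quantity plus a correction term governed by $d_{T_2}(v_i,v_j)$.

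\emph{The main obstacle.} This reattachment is essentially forced — one cannot add the chord $\{v_i,v_j\}$ (it closes a cycle with the $v_i$--$v_j$ path already present in $T_2$) nor delete an edge of $T_2$ (it disconnects $T_2$), so exactly one arc edge must go — and it costs an additive constant: the per-pair estimate gives $\E[d_T(x,y)]\le 2\,d_G(x,y)+c\cdot d_G(x,y)$, where $c$ is the inductive constant, so a naive induction accumulates a constant per chord and is not bounded in terms of $n$. The way around this, which is the technical heart of the statement, is to avoid peeling one face at a time: fix in one shot a random spanning tree of $G$ by deleting one length-weighted random edge from each bounded face, processed from the outer face inward, and bound $\E[d_T(x,y)]$ for each fixed pair directly. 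A geodesic $\rho$ from $x$ to $y$ is rerouted only around those faces whose deleted edge lies on the current version of $\rho$; each such detour costs at most the perimeter of that face, and the laminar structure of the faces keeps these detours from cascading, so the expected total overhead telescopes to $O(d_G(x,y))$. Carrying out this global argument carefully is exactly \cite[Thm.~5.4]{GNRS04}, and that is the route I would follow.
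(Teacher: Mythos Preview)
The paper does not give its own proof of this statement; it is quoted as \cite[Thm.~5.4]{GNRS04} and invoked as a black box in Section~\ref{SEC: embedding into trees}. There is therefore no argument in the present paper to compare yours against.

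As for your sketch on its own merits: the reduction to blocks, the cycle base case (length-proportional edge deletion, stretch~$2$), and the identification of the main obstacle (naive leaf-face induction accumulates an additive constant per chord) are all correct and match the structure of the original GNRS04 argument. Your final paragraph, however, is where the actual content lives, and there you essentially defer back to the reference you are meant to be proving: the assertion that ``the laminar structure of the faces keeps these detours from cascading, so the expected total overhead telescopes to $O(d_G(x,y))$'' is precisely the nontrivial part. In GNRS04 the random spanning tree is not built by independent length-proportional deletions, one per face; the construction processes leaf faces of the weak dual and chooses which edge to drop with probabilities that depend \emph{conditionally} on earlier choices (in effect updating the effective length of the surviving chord as faces are absorbed), and it is this coupling that makes the stretch bound close rather than grow with the number of chords. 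If you want a self-contained proof rather than a citation, that conditional construction and its stretch analysis are what you need to write out.
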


The next result is proved in~\cite[Thm. 4.4]{LMM15}
(which is essentially a restatement of \cite[Thm. 12]{EGKRTT14}).

\begin{theorem}
   \label{thm:retract}
   If $G$ is a terminated metric plane graph and $\gamma(G;\Term(G))=1$, then
   $G$ admits a stochastic terminal embedding into metric outerplanar graphs with expected stretch $O(1)$.
\end{theorem}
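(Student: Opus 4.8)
The final statement to prove is Theorem~\ref{thm:retract}: if $G$ is a terminated metric plane graph with $\gamma(G;\Term(G))=1$, then $G$ admits a stochastic terminal embedding into metric outerplanar graphs with expected stretch $O(1)$. Since the paper attributes this to \cite[Thm.~4.4]{LMM15} and \cite[Thm.~12]{EGKRTT14}, my plan is to reconstruct the argument via a retraction onto the outer face, which is the standard route.

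The plan is to work with the plane drawing in which all terminals lie on a single face $F$, which we take to be the outer face. First I would reduce to the case where $G$ is $2$-connected (or even that the outer face is a simple cycle $C$): one can add zero-length edges along the outer walk and contract appropriately without changing $d_G$ on the terminals or leaving the class of plane graphs with all terminals on one face, so the general case follows from this one by a limiting/quotient argument. Next, the key combinatorial object is a sequence of nested ``slices'' of $G$ by shortest paths between pairs of boundary vertices. Concretely, I would pick a root vertex on $C$ and, using a random shift (as in the Klein–Plotkin–Rao/Rao padded-decomposition scheme, invoked here via Theorem~\ref{THM: beta=1 for planar graphs} for $\beta_{(V(G),d_G)}\le O(1)$), cut $G$ into pieces each of which is again a plane graph with all its relevant boundary on one face, while charging the probability that a pair $s,t$ is separated to $O(1)\cdot d_G(s,t)/\Delta$ at each scale. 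Iterating over geometric scales $\Delta = 2^j$ and composing these random partitions yields a random hierarchical decomposition whose associated ``laminar'' structure is itself (isometrically up to $O(1)$) an outerplanar graph — essentially a caterpillar/cycle-of-trees spanning the boundary — with the terminals sitting on its outer face. The retraction sends each vertex of $G$ to the boundary vertex (or the appropriate cut vertex) of its cluster at the finest relevant scale.

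The heart of the matter, and the step I expect to be the main obstacle, is showing that this retraction is \emph{non-contracting on terminals} while having $O(1)$ expected stretch on all pairs. Non-contraction is where the Okamura–Seymour hypothesis ($\gamma=1$) is essential: because all terminals lie on the single face $C$, and the cutting is done along shortest paths whose endpoints also lie on $C$, one can argue that for two terminals $s,t$ the retracted images are separated in the host outerplanar graph by a curve whose $G$-length is at least $d_G(s,t)$ — this is a planar-duality / Menger-type argument, using that a shortest $s$–$t$ path in $G$ must cross every such separating cut. Bounding the expected stretch then follows from the $\beta=O(1)$ Lipschitz random-partition property: at each scale the diameter of a cluster is $O(\Delta)$, a pair at distance $d$ is cut with probability $O(d/\Delta)$, and summing the geometric series $\sum_j \min(1, d/2^j)\cdot 2^j = O(d)$ gives the bound. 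Finally, to land in \emph{metric trees} rather than merely outerplanar graphs one composes with Theorem~\ref{thm:gnrs} (every metric outerplanar graph stochastically embeds into trees with $O(1)$ stretch) via the Composition Lemma (Lemma~\ref{LMA: the composition lemma}); but since the theorem as stated only asks for an embedding into outerplanar graphs, that last composition is not strictly needed here. I would present the decomposition construction carefully, verify the selector/retraction map preserves terminals by construction, and isolate the non-contraction claim as the one lemma requiring the $\gamma=1$ topology.
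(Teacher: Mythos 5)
You should first note that the paper does not actually prove Theorem~\ref{thm:retract}: it imports it verbatim from \cite[Thm.~4.4]{LMM15} (essentially \cite[Thm.~12]{EGKRTT14}), so there is no in-paper proof to match. Judged on its own, your reconstruction has a fatal quantitative gap in the expected-stretch analysis. You propose to compose $(\beta,\Delta)$-Lipschitz random partitions independently over geometric scales $\Delta=2^j$ and claim the expected host distance is $\sum_j \min(1,d/2^j)\cdot 2^j = O(d)$. That sum is not $O(d)$: for every scale $2^j\geq d$ the term equals $(d/2^j)\cdot 2^j = d$, so the total is $\Theta\bigl(d\cdot\log(\mathrm{diam}/d)\bigr)$. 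This is exactly the logarithmic loss inherent to FRT-style hierarchical decompositions, and it cannot be removed by choosing better partitions at each scale independently --- the paper's own citation of the $\Omega(\log m)$ grid lower bound \cite{KRS01} shows that scale-by-scale padded partitioning with $\beta=O(1)$ does not yield $O(1)$ stochastic tree (or outerplanar) embeddings. The $O(1)$ bound for OS-instances requires a construction that exploits the one-dimensional structure of the boundary cycle so that the separation events across scales are correlated (e.g., the LMM/EGKRTT retraction onto the outer face, whose fibers are non-crossing arcs by planarity); you identify non-contraction as the main obstacle, but the stretch bound is at least as serious and your proposed resolution of it is arithmetically incorrect.

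A secondary, unsubstantiated step is the assertion that the laminar family produced by the hierarchical decomposition ``is itself (isometrically up to $O(1)$) an outerplanar graph.'' A laminar family yields a tree, and outerplanarity of the host must be extracted from the planar topology of $G$ (the cyclic order of the terminals on $F$ and the non-crossing structure of shortest paths to the boundary), not from the combinatorics of a generic decomposition; as written this step is a restatement of the goal rather than an argument. The reduction to a $2$-connected instance and the planar-duality idea for non-contraction are reasonable and in the spirit of the cited proofs, but without a correct mechanism for the $O(1)$ stretch the proposal does not constitute a proof.
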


In conjunction with Theorem~\ref{thm:gnrs}, this shows that every OS-instance admits
a stochastic terminal embedding into metric trees with expected stretch $O(1)$.
Combined with Corollary~\ref{COR: terminals outerface in W}, this finishes
the proof of Theorem~\ref{THM: UB embedding into trees}.

\section{Polymatroid flow-cut gaps}\label{SEC: vertex capacitated flow cut gap}

\newcommand{\e}{\epsilon}
\newcommand{\seteq}{\eqdef}
\newcommand{\len}{\ell}

We now discuss a network model introduced in \cite{CKRV15} that
generalizes edge and vertex capacities.
Recall that if $S$ is a finite set, then a function $f : 2^S \to \R$ is called {\em submodular}
if $f(A)+f(B) \geq f(A \cap B) + f(A \cup B)$ for all subsets $A,B \subseteq S$.
For an undirected graph $G=(V,E)$, we let $E(v)$ denote the set of edges incident to $v$.
A collection $\vec{\rho} = \{ \rho_v : 2^{E(v)} \to \R_+ \}_{v \in V}$ of monotone, submodular
functions are called {\em polymatroid capacities on $G$.}

Say that a function $\varphi : E \to \R_+$ is {\em feasible with respect to $\vec{\rho}$}
if it holds that for every $v \in V$ and subset $S \subseteq E(v)$,
it holds that $\sum_{e \in S} \f(e) \leq \rho_v(S)$.
Given demands $\dem : V \times V \to \R_+$, one defines the {\em maximum concurrent flow value}
of the polymatroid network $(G,\vec{\rho},\dem)$,
denoted $\mcf_G(\vec{\rho},\dem)$,
as the maximum value $\e > 0$
such that one can route an $\e$-fraction of all demands simultaneously
using a flow that is feasible with respect to $\vec{\rho}$.

For every subset $S \subseteq E$, define the cut semimetric $\sigma_S : V \times V \to \{0,1\}$
by $\sigma_S(u,v) \seteq 0$ if and only if there is a path from $u$ to $v$ in the graph $G(V,E \setminus S)$.
Say that a map $g : S \to V$ is {\em valid} if it maps every edge in $S$ to one of its two endpoints in $V$.
One then defines the {\em capacity of a set $S \subseteq E$} by
\[
   \nu_{\vec{\rho}}(S) \seteq \min_{\substack{g : S \to V \\ \mathrm{valid}}} \sum_{v \in V} \rho_v(g^{-1}(v))\,.
\]
The {\em sparsity of $S$} is given by
\[
   \Phi_G(S; \vec{\rho},\dem) \seteq \frac{\nu_{\vec{\rho}}(S)}{\sum_{u,v \in V} \dem(u,v) \sigma_S(u,v)}\,.
\]
We also define $\Phi_G(\vec{\rho},\dem) \seteq \min_{\emptyset \neq S \subseteq V} \Phi(S;\vec{\rho},\dem).$
Our goal in this section is to prove the following theorem.

\begin{theorem}\label{thm:poly-flowcut}
   There is a constant $C \geq 1$ such that the following holds.
   Suppose that $G=(V,E)$ is a planar graph and $D \subseteq F_1 \cup F_2 \cup \cdots \cup F_{\gamma}$,
   where each $F_i$ is a face of $G$.
   Then for every collection $\vec{\rho}$ of polymatroid capacities on $G$ and every set of demands $\dem : D \times D \to \R_+$
   supported on $D$, it holds that
   \[
      \mcf_G(\vec{\rho},\dem) \leq \Phi_G(\vec{\rho},\dem) \leq C \gamma \cdot \mcf_G(\vec{\rho},\dem)\,.
   \]
\end{theorem}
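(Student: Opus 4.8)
The left-hand inequality $\mcf_G(\vec{\rho},\dem) \le \Phi_G(\vec{\rho},\dem)$ is weak LP duality and should be routine (it is the polymatroid analogue of the elementary direction of max-flow/min-cut; see~\cite{CKRV15}). The substance is the right-hand inequality, and the plan is to reduce it to the edge-capacitated, single-face (Okamura--Seymour) regime, where an exact version of the polymatroid flow-cut theorem is available in~\cite{LMM15} (the $\gamma=1$ vertex-capacitated Okamura--Seymour theorem), and then pay a factor of $O(\gamma)$ for peeling the $\gamma$ faces down to one. Concretely, I would mirror the structure of Section~\ref{SEC: proof of main result}: pick vertices $v_i \in V(F_i)$, root at $r=v_1$, take shortest paths $P_i$ from $v_i$ to $r$ with respect to a suitable length function (the one induced by an optimal fractional metric certifying $\Phi_G$), and let $P$ be their union; apply Klein's Tree-Cut to $(G,P)$ and the vertex-splitting/star operations to obtain $G_2$ and the face $F_{new}$ whose boundary is $A$, exactly as in the edge-capacitated proof.

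The key difference is that we cannot afford to lose $O(\log\gamma)$ from stochastically embedding $A$ into trees; instead we want a deterministic $O(\gamma)$ bound, which suggests replacing the tree-embedding step by a direct argument. Here is where I would think hardest: the set $A$ is a ``cut boundary'' carved out of the graph, and in the vertex-capacitated world cutting along $A$ must be accounted for in the capacity functions $\rho_v$ (a vertex on $P$ splits into several copies, and its submodular capacity must be distributed among them). The natural move is to define, on the Tree-Cut graph, polymatroid capacities on the copies of each $v \in V(P)$ by taking any valid split of $\rho_v$ — and then observe that the resulting instance has all demand-terminals on the single face $F_{new}$ (by Lemma~\ref{lem:goodface}-style reasoning, every original face $F_i$ meets $A$ in $\ge 2$ vertices), so it is a $\gamma=1$ polymatroid Okamura--Seymour instance to which~\cite{LMM15} applies with $O(1)$ gap. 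The factor $\gamma$ then enters when we translate a sparse cut in $G_2$ back to a sparse cut in $G$: a cut in $G_2$ separates terminals along $F_{new}$, and reassembling the copies of $P$ can increase the separated demand by the number of ``strands'' of $P$, which is $O(\gamma)$ since $P$ is a union of $\gamma-1$ shortest paths; equivalently, $\dil$ along $P$ is controlled, but the combinatorial multiplicity of face-crossings is $\le 2(\gamma-1)$.

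So the skeleton is: (i) fix an optimal metric/flow pair certifying $\Phi_G(\vec\rho,\dem)$; (ii) build $P$ as a union of $\le \gamma-1$ shortest paths and run Tree-Cut to get a planar instance with a single terminal-face, transporting the polymatroid capacities through the vertex splits; (iii) invoke the $O(1)$ polymatroid flow-cut gap for single-face planar networks~\cite{LMM15}; (iv) pull the resulting sparse cut back through the Tree-Cut, losing a factor $O(\gamma)$ from the multiplicity with which $P$'s copies wrap around $F_{new}$; conclude $\Phi_G \le O(\gamma)\cdot \mcf_G$. The main obstacle I anticipate is step (ii)--(iv) bookkeeping: making precise how a submodular vertex capacity behaves under Klein's vertex-splitting so that feasible flows and valid cut-assignments correspond on the two sides without distortion (apart from the unavoidable $O(\gamma)$), and in particular checking that the length-zero edges and stars added inside $F_{new}$ do not create spurious cheap cuts. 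A clean way to sidestep some of this may be to work entirely on the dual/LP side: express $\Phi_G$ and $\mcf_G$ as values of dual linear programs, show the Tree-Cut transformation gives a feasible point of the transformed LP whose objective degrades by $O(\gamma)$, and quote~\cite{LMM15} as a black box for the single-face LP — but either way, verifying the $O(\gamma)$ loss is exactly the multiplicity of $V(P)$ on $F_{new}$ is the crux.
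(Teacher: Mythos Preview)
Your approach diverges from the paper's and leaves unresolved precisely the step you yourself flag as ``the main obstacle.'' The paper does \emph{not} reuse the Tree-Cut/Peeling apparatus of Section~\ref{SEC: proof of main result} for the polymatroid bound. Instead it works directly in the thin-tree embedding framework of~\cite{LMM15}: a Rounding Theorem (Theorem~\ref{thm:rounding}) asserts that a random $\Delta$-thin map $\Psi:V\to V(T)$ into trees satisfying $\E|\nabla_\tau \Psi(v)|_\infty \le L$ for all $v,\tau$ and $\E[d_T(\Psi(u),\Psi(v))]\ge d_{G,\ell}(u,v)/K$ for all $u,v\in D$ yields a polymatroid flow-cut gap of $O(\Delta KL)$. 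The paper then produces such a $\Psi$ by choosing uniformly at random among $2\gamma$ maps: the $\gamma$ single-face embeddings $\Psi_i$ supplied by Theorem~\ref{thm:face-embed} (each $4$-thin with constants $K_0,L_0=O(1)$), together with the $\gamma$ distance-to-face maps $\Psi'_i(v)=d_{G,\ell}(v,F_i)$ (each $2$-thin and $1$-Lipschitz by Fact~\ref{fact:linemap}). A short triangle-inequality case analysis on a pair $u\in F_i$, $v\in V$ --- splitting on whether $d_{G,\ell}(v,F_i)$ is large or small relative to $d_{G,\ell}(u,v)$ --- shows the mixture has $L=O(1)$ and $K=O(\gamma)$, and Theorem~\ref{thm:rounding} finishes. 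No graph surgery, no peeling, no capacity bookkeeping.

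The gap in your route is exactly where you concede it: transporting submodular vertex capacities through Klein's vertex-splitting. When Tree-Cut replaces $v\in V(P)$ by $\deg_P(v)$ copies joined by zero-length stars, there is no canonical way to distribute $\rho_v$ among the copies so that feasible flows and valid cut-assignments $g:S\to V$ correspond on the two sides; ``taking any valid split of $\rho_v$'' is not a definition, and different choices can change both $\mcf$ and $\Phi$ on the transformed instance. More fundamentally, the single-face result of~\cite{LMM15} is not a black-box sparsest-cut statement one can invoke after an arbitrary planar transformation: its engine \emph{is} the thin-tree embedding plus Theorem~\ref{thm:rounding}, and the $\Delta$-thin property --- which bounds how the edges incident to each vertex spread over paths in $T$ --- is precisely what lets the rounding respect submodularity. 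Checking that thinness and the gradient bound survive the vertex-splitting and the zero-length stars inside $F_{new}$ would essentially force you to rebuild the paper's argument anyway, so the Tree-Cut detour buys nothing here.
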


\subsection{Embeddings into thin trees}

In order to prove this, we need two results from \cite{LMM15}.
Suppose $G$ is an undirected graph, $T$ is a connected tree,
and $f : V(G) \to V(T)$.
For every distinct pair $u,v \in V(G)$, let $P^T_{uv}$ denote the unique simple
path from $f(u)$ to $f(v)$ in $T$.
Say that the map $f$ is {\em $\Delta$-thin} if, for every $u \in V(G)$,
the induced subgraph on $\bigcup_{v : \{u,v\} \in E(G)} P^T_{uv}$ can be covered
by $\Delta$ simple paths in $T$ emanating from $f(u)$.

Suppose further that $G$ is equipped with edge lengths $\len : E(G) \to \R_+$.
If $(X,d_X)$ is a metric space and $f : V(G) \to X$, we make
the following definition.
For $\tau > 0$ and any $u \in V(G)$:
\ifprocs
    \[
   |\nabla_{\tau} f(u)|_{\infty} \seteq \max_{\{u,v\} \in E \textrm{ and } \len(u,v) \in [\tau,2\tau]} \left\{ \frac{d_X(f(u),f(v))}{\len(u,v)} \right\} .
    \]
\else
    \[
       |\nabla_{\tau} f(u)|_{\infty} \seteq \max \left\{ \frac{d_X(f(u),f(v))}{\len(u,v)} : \{u,v\} \in E \textrm{ and } \len(u,v) \in [\tau,2\tau] \right\} .
    \]
\fi

\begin{fact}\label{fact:linemap}
   Suppose that $f : V(G) \to \R$ is $1$-Lipschitz, where $V(G)$ is equipped with the path metric $d_{G,\ell}$.
   Then $f$ is $2$-thin and
   \[\max \left\{ |\nabla_{\tau} f(u)|_{\infty} : u \in V(G), \tau > 0 \right\}\leq 1\,.\]
\end{fact}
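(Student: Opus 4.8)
The claim to prove is Fact~\ref{fact:linemap}: a $1$-Lipschitz map $f : V(G) \to \R$ is $2$-thin and has all its scaled gradients $|\nabla_\tau f(u)|_\infty \le 1$.

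\textbf{Proof proposal.} The plan is to verify the two assertions separately, both essentially by unwinding definitions. For the gradient bound: fix $u \in V(G)$ and $\tau > 0$, and take any edge $\{u,v\} \in E$ with $\len(u,v) \in [\tau, 2\tau]$. Since $f$ is $1$-Lipschitz with respect to $d_{G,\ell}$ and $d_{G,\ell}(u,v) \le \len(u,v)$ (the single edge is a path from $u$ to $v$), we get $|f(u)-f(v)| \le \len(u,v)$, hence $\tfrac{|f(u)-f(v)|}{\len(u,v)} \le 1$. Taking the max over all such edges $v$ gives $|\nabla_\tau f(u)|_\infty \le 1$, and then the max over all $u$ and $\tau$ is $\le 1$. (If there are no such edges, the max is vacuously $0 \le 1$, or one adopts the convention that the empty max is $0$.)

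\textbf{The thinness claim} is the only part with any content, and even here it is mostly a matter of choosing the right target tree. We must exhibit a connected tree $T$ and realize $f$ as a map into $V(T)$ that is $2$-thin; but in the way the notion is used here, the relevant ``tree'' is just the real line $\R$ itself (a path, hence a tree), with $f$ its own embedding. For a fixed $u$, each neighbor $v$ of $u$ has $f(v)$ lying either to the left of $f(u)$ or to the right of $f(u)$ on the line (or equal to it). Consequently the union $\bigcup_{v : \{u,v\} \in E(G)} P^{\R}_{uv}$, where $P^\R_{uv} = [\min(f(u),f(v)), \max(f(u),f(v))]$, is contained in the union of two rays emanating from $f(u)$: the leftward ray $(-\infty, f(u)]$ and the rightward ray $[f(u), +\infty)$. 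These are two simple paths in $\R$ emanating from $f(u)$, so the induced subgraph is covered by $2$ such paths, which is exactly the definition of $2$-thin. This holds for every $u$, completing the proof.

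\textbf{Anticipated obstacle.} There is essentially no obstacle; the only subtlety is purely bookkeeping about what the ``tree'' $T$ is meant to be in the statement (the line $\R$, treated as a path graph, with $f$ itself the vertex map) and making sure the path metric used to define $1$-Lipschitz is $d_{G,\ell}$ so that a single edge gives the bound $|f(u)-f(v)| \le \len(u,v)$. Once those identifications are in place, both assertions are one-line verifications. I would keep the write-up to a short paragraph of the form above, without belaboring the trivial direction.
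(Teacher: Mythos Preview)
The paper states this as a \emph{Fact} without proof; it is treated as an elementary observation. Your verification is correct and is exactly the intended argument: the gradient bound is immediate from $1$-Lipschitzness, and $2$-thinness follows because from any point of $\R$ there are only two directions, so the union of the paths $P^{\R}_{uv}$ is covered by the two rays out of $f(u)$.
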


\begin{theorem}[Rounding theorem~\cite{LMM15}]
   \label{thm:rounding}
   Consider a graph $G=(V,E)$ and a subset $D \subseteq V$.
   Suppose that for every length $\len : E \to \R_+$, there is a random $\Delta$-thin mapping
   $\Psi : V \to V(T)$ into some random tree $T$ that satisfies:
   \begin{enumerate}
      \item For every $v \in V$ and $\tau > 0$:  $\E |\nabla_{\tau} \Psi(v)|_{\infty} \leq L$.
      \item For every $u,v \in D$:
         \[
            \E\left[d_T(\Psi(u),\Psi(v))\right] \geq \frac{d_{G,\len}(u,v)}{K}\,.
         \]
   \end{enumerate}
   Then for every collection $\vec{\rho}$ of polymatroid capacities on $G$ and every set of demands $\dem : D \times D \to \R_+$
   supported on $D$, it holds that
   \[
      \mcf_G(\vec{\rho},\dem) \leq \Phi_G(\vec{\rho},\dem) \leq O(\Delta KL) \cdot \mcf_G(\vec{\rho},\dem)\,.
   \]
\end{theorem}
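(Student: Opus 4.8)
The plan is to prove the two inequalities separately. The left inequality $\mcf_G(\vec\rho,\dem)\le\Phi_G(\vec\rho,\dem)$ is the routine weak-duality direction, which I would dispatch first: if a flow feasible for $\vec\rho$ routes an $\e$-fraction of all demands and $S\subseteq E$ is any cut, then every flow path joining a pair $u,v$ with $\sigma_S(u,v)=1$ must use an edge of $S$, so feasibility together with the definition of $\nu_{\vec\rho}$ forces the flow crossing $S$ to be at most $\nu_{\vec\rho}(S)$; hence $\e\sum_{u,v}\dem(u,v)\sigma_S(u,v)\le\nu_{\vec\rho}(S)$, and minimizing over $S$ gives the claim. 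For the right inequality I would first pass, via linear-programming duality for the polymatroid concurrent-flow LP (see \cite{CKRV15,LMM15}), to a dual-optimal edge length $\len:E\to\R_+$ for which
\[
   \mcf_G(\vec\rho,\dem)=\frac{\sum_{v\in V}\hat\rho_v(\len|_{E(v)})}{\sum_{u,v}\dem(u,v)\,d_{G,\len}(u,v)}\,,
\]
where $\hat\rho_v$ is the Lov\'asz extension of $\rho_v$ --- equivalently the support function of the polymatroid $\{x\in\R_+^{E(v)}:\sum_{e\in A}x_e\le\rho_v(A)\ \forall A\}$. Applying the hypothesis to this $\len$ produces a random $\Delta$-thin map $\Psi:V\to V(T)$ into a random metric tree $T$ with $\E\,|\nabla_\tau\Psi(v)|_\infty\le L$ for all $v,\tau$ and $\E\,d_T(\Psi(u),\Psi(v))\ge d_{G,\len}(u,v)/K$ for $u,v\in D$; the task is to round $(\Psi,T)$ into an integral cut losing only an $O(\Delta KL)$ factor.

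The rounding I have in mind is scale by scale. For each dyadic scale $\Delta_i=2^i$, take a random $(O(1),\Delta_i)$-Lipschitz partition of the metric tree $T$ --- trees admit these trivially, e.g.\ root $T$, choose a uniform offset, and cut at all depths congruent to the offset modulo $\Delta_i$ --- and pull it back through $\Psi$ to the cut $S_i\subseteq E(G)$ consisting of the $G$-edges joining vertices in different classes. On the demand side, a pair $u,v$ is separated by a tree partition at scale $\Delta_i$ with probability $\Theta\!\big(\min(1,d_T(\Psi(u),\Psi(v))/\Delta_i)\big)$, so the expected amount of demand separated by $S_i$ is $\gtrsim\sum_{u,v}\dem(u,v)\min(1,\E_\Psi d_T(\Psi(u),\Psi(v))/\Delta_i)$; choosing $i$ from a region-growing distribution over scales converts this, using the non-contraction hypothesis, into $\gtrsim\tfrac1K\sum_{u,v}\dem(u,v)d_{G,\len}(u,v)$. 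On the capacity side I claim that $\E[\nu_{\vec\rho}(S_i)]\le O(\Delta L)\cdot\tfrac1{\Delta_i}\sum_{v\in V}\hat\rho_v(\len|_{E(v)})$. Granting this and using that $\Phi_G(\vec\rho,\dem)\le\nu_{\vec\rho}(S)/(\text{demand separated by }S)$ for every cut $S$, taking expectations over the scale and the tree yields $\Phi_G(\vec\rho,\dem)\le O(\Delta KL)\cdot\mcf_G(\vec\rho,\dem)$.

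The capacity claim is the heart. Root $T$; for a fixed cut $S_i$ bound $\nu_{\vec\rho}(S_i)$ by the cost of the valid assignment that sends each cut edge to whichever endpoint has its $\Psi$-image on the root side of the partition boundary, giving $\nu_{\vec\rho}(S_i)\le\sum_v\rho_v(B_v)$, where $B_v$ is the set of neighbors $w$ of $v$ such that $\Psi(v)$ lies on the root side of the boundary and $\Psi(w)$ lies beyond it. Fixing $v$, $\Delta$-thinness covers the tree edges relevant to $B_v$ by $\Delta$ simple paths $Q_1,\dots,Q_\Delta$ emanating from $\Psi(v)$; along one such path, as the offset varies the contributed part of $B_v$ is the decreasing family $\{w:d_T(\Psi(v),\Psi(w))\ge R\}$ with $R$ essentially uniform on $(0,\Delta_i]$, of expected $\rho_v$-value $\tfrac1{\Delta_i}\int_0^{\Delta_i}\rho_v(\{w:d_T(\Psi(v),\Psi(w))\ge u\})\,du$. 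Using $d_T(\Psi(v),\Psi(w))\le|\nabla_{\len(v,w)}\Psi(v)|_\infty\cdot\len(v,w)$ and rescaling the integration variable by the gradient turns this into $O(1/\Delta_i)$ times a gradient-weighted copy of $\int_0^\infty\rho_v(\{w:\len(v,w)\ge t\})\,dt=\hat\rho_v(\len|_{E(v)})$; bucketing the neighbors by dyadic values of $\len(v,w)$ so that only the gradient slot at $\tau\asymp\Delta_i$ ever enters, taking expectations, and invoking the gradient hypothesis collapses the whole thing to $O(L)\cdot\hat\rho_v(\len|_{E(v)})/\Delta_i$; summing the $\Delta$ paths and then over $v\in V$ gives the claim.

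The step I expect to be the main obstacle is exactly this capacity estimate. The hypothesis supplies only the \emph{per-scale} bound $\E\,|\nabla_\tau\Psi(v)|_\infty\le L$, so the rescaling and all subsequent summations must be arranged so that, at scale $\Delta_i$, one invokes only gradients at $\tau\asymp\Delta_i$; a cruder argument replacing the gradients by $\max_\tau|\nabla_\tau\Psi(v)|_\infty$ would cost an extra logarithmic factor and ruin the bound. It is here, too, that the submodularity of $\rho_v$ (rather than mere modularity) matters: it is precisely its Lov\'asz extension $\hat\rho_v$ --- the polymatroid support function --- that transforms correctly under the change of variables, which is why $\sum_v\hat\rho_v(\len|_{E(v)})$, and not an edge-length sum $\sum_e c(e)\len(e)$, is the right fractional cut capacity. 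A secondary, bookkeeping-heavy point is choosing the region-growing scale distribution and the truncation needed to pass from $\E\,d_T\ge d_{G,\len}/K$ to the demand lower bound uniformly across scales. Everything else --- weak duality, the tree partitions, and the averaging step $\Phi_G\cdot(\text{demand separated})\le\nu_{\vec\rho}$ --- is routine by comparison.
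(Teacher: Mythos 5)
First, a framing point: the paper contains no proof of Theorem~\ref{thm:rounding} — it is imported verbatim from \cite{LMM15} and used as a black box — so what you have written is a reconstruction of the \cite{LMM15} argument rather than something to be checked against an in-paper proof. Within that reconstruction, your left inequality (weak duality: for any valid $g$, feasibility gives $\sum_{e\in S}\f(e)\le\sum_v\rho_v(g^{-1}(v))$, hence the flow crossing $S$ is at most $\nu_{\vec\rho}(S)$) and the LP-duality reduction to a single dual-optimal length $\len$ are correct and standard, and your capacity-side outline (thinness $\Rightarrow$ $\Delta$ nested families along paths from $\Psi(v)$, the Lov\'asz extension as $\int_0^\infty\rho_v(\{e:\len(e)\ge t\})\,dt$, dyadic bucketing so that only $|\nabla_\tau\Psi(v)|_\infty$ at the matching scale is ever invoked) is the right idea.

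The genuine gap is on the demand side. For a fixed tree, a random $(O(1),\Delta_i)$-Lipschitz partition separates $\Psi(u),\Psi(v)$ with probability $\Theta\bigl(\min(1,d_T(\Psi(u),\Psi(v))/\Delta_i)\bigr)$, but $x\mapsto\min(1,x)$ is \emph{concave}, so averaging over the random tree gives $\E[\min(1,d_T/\Delta_i)]\le\min(1,\E[d_T]/\Delta_i)$ — Jensen runs opposite to the direction you assert. Since hypothesis (2) only lower-bounds $\E[d_T(\Psi(u),\Psi(v))]$, the distribution may put mass $1-\delta$ at $0$ and mass $\delta$ at a huge value $M$ with $\delta M=d_{G,\len}(u,v)/K$; then at \emph{every} scale the separation probability is at most $\delta$, which can be arbitrarily smaller than $\min(1,\E d_T/\Delta_i)$, and no choice of region-growing scale distribution recovers the claimed lower bound on separated demand relative to the $\sum_i p_i/\Delta_i$ weighting of the capacity. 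The repair — and the reason the hypothesis demands a \emph{tree} — is to drop random partitions entirely and use the exact cut decomposition of the tree metric: each tree edge $e$, of length $\ell_T(e)$, induces the cut $S_e\subseteq E(G)$ of $G$-edges whose endpoints map to opposite components of $T\setminus e$, and $d_T(\Psi(u),\Psi(v))=\sum_e\ell_T(e)\,\mathbf{1}[e\text{ separates }\Psi(u),\Psi(v)]$ holds \emph{exactly and linearly}. Hence $\sum_e\ell_T(e)\,\E[\textrm{demand separated by }S_e]=\E\bigl[\sum_{u,v}\dem(u,v)\,d_T(\Psi(u),\Psi(v))\bigr]\ge\frac1K\sum_{u,v}\dem(u,v)\,d_{G,\len}(u,v)$ with no truncation and no Jensen loss; pairing this with the bound $\sum_e\ell_T(e)\,\E[\nu_{\vec\rho}(S_e)]\le O(\Delta L)\sum_v\hat\rho_v(\len|_{E(v)})$ (your capacity analysis transfers to this weighted form essentially unchanged, with $\sum_e\ell_T(e)\rho_v(B_v^e)$ along a path becoming $\int_0^\infty\rho_v(\{w:\Psi(w)\text{ beyond distance }r\})\,dr$) and with $\Phi_G(\vec\rho,\dem)\le\min_e\nu_{\vec\rho}(S_e)/\dem(S_e)\le\bigl(\sum_e\ell_T(e)\E\nu_{\vec\rho}(S_e)\bigr)/\bigl(\sum_e\ell_T(e)\E[\dem\text{-sep}(S_e)]\bigr)$ yields the stated $O(\Delta KL)$ bound. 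As written, the single-scale sampling scheme does not survive; the rest of your plan does.
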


\begin{theorem}[Face embedding theorem~\cite{LMM15}]
   \label{thm:face-embed}
      Suppose that $G=(V,E)$ is a planar graph and $D \subseteq V$ is a subset
      of vertices contained in a single face of $G$.  Then for every $\len : E \to \R_+$,
      there is a random $4$-thin mapping $\Psi : V \to V(T)$ into a random tree metric
      that satisfies the assumptions of Theorem~\ref{thm:rounding} with $K,L \leq O(1)$.
\end{theorem}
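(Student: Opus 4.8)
Since $D$ lies on a single face of the planar graph $G$, the plan is to reduce to the Okamura--Seymour situation and then borrow the two embedding theorems already in hand. After standard preprocessing (subdividing edges and inserting zero-length edges, which does not change $d_{G,\len}$ on the original vertices), redraw $G$ so that the face containing $D$ is the outer face, with boundary cycle $C$ and $D\subseteq V(C)$. I would then construct the random pair $(T,\Psi)$ as a composition: first apply Theorem~\ref{thm:retract} to get a random \emph{outerplanar} metric graph $H$ together with a map $r\colon V(G)\to V(H)$ that keeps $D$ on the outer face of $H$ and is non-contracting on $D$ --- and which, in the constructions of \cite{LMM15,EGKRTT14}, is realized as a genuine $1$-Lipschitz retraction; then apply Theorem~\ref{thm:gnrs} to embed $H$ into a random metric tree $T$, taking that embedding in its natural ``radial'' form relative to the outer face of $H$; and set $\Psi \seteq (\text{tree map})\circ r$. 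It then remains to verify the three requirements of Theorem~\ref{thm:rounding}.

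\emph{Non-contraction on $D$ ($K=O(1)$).} This is immediate from the black boxes: for $u,v\in D$, every sample satisfies $d_H(r(u),r(v))\ge d_{G,\len}(u,v)$ (non-contraction on terminals in Theorem~\ref{thm:retract}) and $d_T\ge d_H$ (domination in Theorem~\ref{thm:gnrs}), hence $d_T(\Psi(u),\Psi(v))\ge d_{G,\len}(u,v)$ with probability $1$, so assumption~(2) holds with $K=1$.

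\emph{Local gradient ($L=O(1)$).} Since $r$ is $1$-Lipschitz, any edge $\{v,w\}\in E$ with $\len(v,w)\in[\tau,2\tau]$ has $d_H(r(v),r(w))\le 2\tau$; thus, for a fixed $v$, the images of all its $G$-neighbors at scale $\tau$ lie in the $d_H$-ball of radius $2\tau$ about $r(v)$. It therefore suffices that the tree embedding of the outerplanar (indeed series-parallel) graph $H$ is \emph{padded at every scale}, i.e. $\E\,[\max\{d_T(x,y):y\in B_H(x,2\tau)\}]\le O(\tau)$ for all $x,\tau$. This strengthening of Theorem~\ref{thm:gnrs} is exactly what its recursive decomposition along the outer face of $H$ delivers: at scale $\tau$ each edge $e$ is cut with probability $O(\len(e)/\tau)$, so a radius-$2\tau$ ball survives intact with constant probability and contributes only $O(\tau)$ when split; summing the resulting geometric series over scales yields $\E\,|\nabla_{\tau}\Psi(v)|_\infty\le O(1)=L$. (For edges inside $C$ this is the ``delete a random boundary edge with probability proportional to its length'' estimate, of which Fact~\ref{fact:linemap} is the degenerate one-dimensional case.)

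\emph{Thinness ($\Delta=4$), the main obstacle.} This is where the real work sits: the distance and gradient bounds above would survive essentially any reasonable recursive decomposition, but $4$-thinness forces a careful, planarity-aware construction. One must show that for every $u\in V(G)$ the union $\bigcup_{\{u,v\}\in E}P^{T}_{\Psi(u)\Psi(v)}$ of tree-geodesics to the images of $u$'s neighbors is a subtree of $T$ with at most four directions emanating from $\Psi(u)$. The key structural input is that in an OS-instance the separating cuts used in the decomposition of $H$ can be taken to be shortest paths between two boundary points, so every decomposition piece meets $C$ in a \emph{contiguous arc}; and the ``radial'' tree $T$ is built so that inside the smallest piece containing $u$, every neighbor of $u$ is reached through at most two directions along that piece's boundary arc plus at most two ``transverse'' directions into sub-pieces --- the analogue of the elementary fact that cutting a plane graph along a single shortest path splits the rotation around any vertex into at most two contiguous arcs. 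Making this precise is essentially the content of the face-embedding construction of \cite{LMM15}, and I expect it to be the bulk of the argument, with the constant $4$ arising exactly as ``two along $C$, two transverse.'' Verifying these three items establishes the stated existence of a random $4$-thin $\Psi$ satisfying the hypotheses of Theorem~\ref{thm:rounding} with $K,L\le O(1)$.
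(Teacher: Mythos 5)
This theorem is not proved in the paper at all: it is quoted as a black box from \cite{LMM15} (their face-embedding theorem), so there is no in-paper argument to compare against, and the question is whether your reconstruction stands on its own. The easy part does: your verification of assumption (2) of Theorem~\ref{thm:rounding} is correct, since both Theorem~\ref{thm:retract} and Theorem~\ref{thm:gnrs} are non-contracting with probability one, giving $K=1$ for the composition. The other two properties, however, do not follow from those theorems as stated, and your arguments for them have genuine gaps.

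First, the $L$ bound. A stochastic (terminal) embedding in this paper's sense is non-contracting pointwise and has bounded expansion only \emph{in expectation}; in particular the random map $r$ of Theorem~\ref{thm:retract} is not $1$-Lipschitz --- in a fixed sample, two adjacent vertices of $G$ may land far apart in $H$, with only $\E[d_H(r(u),r(v))]\le O(\len(u,v))$ guaranteed. So the first step of your gradient argument (``the images of all neighbors at scale $\tau$ lie in a $2\tau$-ball of $H$'') is unjustified, and your parenthetical claim that the construction ``is realized as a genuine $1$-Lipschitz retraction'' is an assertion about the internals of \cite{LMM15,EGKRTT14} that goes beyond anything stated here. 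Moreover, assumption (1) of Theorem~\ref{thm:rounding} bounds the expectation of a \emph{maximum} over all edges at $v$ of a given scale, which per-pair expected-stretch bounds (all that Theorems~\ref{thm:retract} and~\ref{thm:gnrs} provide) do not control; the ``padded at every scale'' strengthening of Theorem~\ref{thm:gnrs} is asserted rather than proved, and the series you sketch is not geometric --- summing a contribution of order $(\tau/\Delta_j)\cdot\Delta_j=\tau$ over all scales $\Delta_j\ge\tau$ gives a logarithmic factor unless one exploits the specific recursive structure. Second, and decisively, your treatment of $4$-thinness --- which you correctly identify as the heart of the theorem --- is circular: you defer it to ``the face-embedding construction of \cite{LMM15},'' which is exactly the statement being proved. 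This is precisely why the result cannot be obtained by composing the two expected-stretch theorems as black boxes: thinness and the local gradient bound are properties of a particular planarity-aware construction, not of arbitrary low-stretch random tree embeddings. As it stands, the proposal is a plausible outline that locates the difficulties but resolves neither of the two hard ones.
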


We now use this to prove the following multi-face embedding theorem;
combined with Theorem~\ref{thm:rounding}, it yields Theorem~\ref{thm:poly-flowcut}.

\begin{theorem}[Multi-face embedding theorem]
   Suppose that $G=(V,E)$ is a planar graph and $D \subseteq F_1 \cup F_2 \cup \cdots \cup F_{\gamma}$,
   where each $F_i$ is a face of $G$.
   Then for every $\len : E \to \R_+$, there is a random $4$-thin mapping $\Psi : V \to V(T)$
   into a random tree metric that satisfies the assumptions of Theorem~\ref{thm:rounding} with $L \leq O(1)$ and $K \leq O(\gamma)$.
\end{theorem}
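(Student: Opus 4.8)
The plan is to reduce the $\gamma$-face case to the single-face Face embedding theorem (Theorem~\ref{thm:face-embed}) via the cutting device of Section~\ref{SEC: proof of main result}. Pick a vertex $v_i\in V(F_i)$ for each $i$, let $P\subseteq G$ be the union of the shortest paths from $v_2,\dots,v_\gamma$ to $v_1$, and run Klein's Tree-Cut together with the vertex-splitting and length-$0$ star operations on $(G,P)$ to obtain a planar graph $G_2$ in which, after identifying each $v\in V(P)$ with one of its copies through an injection $\iota\colon V(G)\hookrightarrow V(G_2)$, all terminals lie on one face and $d_{G_2}(\iota x,\iota y)=d_G(x,y)$ for all $x,y$. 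Since the scale-$\tau$ gradients do not see length-$0$ edges, I would first assign the gadget edges a length $\varepsilon>0$ and pass to the limit $\varepsilon\to 0$; applying the gradient bound of Theorem~\ref{thm:face-embed} at scale $\varepsilon$ then ensures that, in the limit, all copies of a vertex of $P$ get mapped to a single vertex of the host tree. Now apply Theorem~\ref{thm:face-embed} to $(G_2,\iota(D))$ to obtain a random tree $T$ and a random $4$-thin map $\Psi_2\colon V(G_2)\to V(T)$ with $\E\,|\nabla_\tau\Psi_2(\cdot)|_\infty\le O(1)$ everywhere and $\E\,[d_T(\Psi_2 x,\Psi_2 y)]\ge d_{G_2}(x,y)/O(1)$ for $x,y\in\iota(D)$, and set $\Psi\eqdef\Psi_2\circ\iota$.

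The distance lower bound on $D$ is then immediate: assuming (as we may, after attaching length-$0$ pendants) that all terminals lie off $P$, for $u,v\in D$ we get $\E\,[d_T(\Psi u,\Psi v)]=\E\,[d_T(\Psi_2\iota u,\Psi_2\iota v)]\ge d_{G_2}(\iota u,\iota v)/O(1)=d_G(u,v)/O(1)$. What fails is $4$-thinness and the $O(1)$ gradient bound, and this is precisely where the factor $\gamma$ is born: a vertex $w\in V(P)$ is split in $G_2$ into $\deg_P(w)$ copies, its $G$-edges are distributed among them, and since the four-path covers witnessing thinness of $\Psi_2$ at the different copies of $w$ need not align, the thinness of $\Psi$ at $w$ is only bounded by $4\deg_P(w)$, which can be $\Theta(\gamma)$ (think of many shortest paths sharing a vertex); similarly $\E\,|\nabla_\tau\Psi(w)|_\infty$ can be $\Theta(\gamma)$ there. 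Rescaling $T$ by $1/\gamma$ turns the gradient bound into $L\le O(1)$ at the cost of $K\le O(\gamma)$ and leaves thinness unchanged, so the one genuinely delicate point that remains is to bring the thinness back down to exactly $4$.

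I expect the resolution is to avoid the black-box application of Theorem~\ref{thm:face-embed} to $G_2$ and instead build $T$ from the single-face maps $\Psi_1,\dots,\Psi_\gamma$ that Theorem~\ref{thm:face-embed} produces for the individual faces $F_1,\dots,F_\gamma$: partition $V(G)\setminus V(P)$ into the connected components hanging off $P$, let $\Psi$ coincide with $\Psi_i$ on the components attached near $F_i$ (mapping them into a copy of the corresponding tree $T_i$), and glue these subtrees along a skeleton obtained from $P$. Because every edge of $G$ lies either within one such component or between $V(P)$ and a component, each vertex then sees only the skeleton together with the image of a single $\Psi_i$, which is what should keep $\Psi$ $4$-thin with $L\le O(1)$, while routing each cross-face demand pair through the skeleton is exactly what costs the $O(\gamma)$ factor in $K$; making these claims precise — in particular attaching the $T_i$'s to the skeleton so that thinness does not exceed $4$ at the vertices of $P$, and so that distances along the skeleton do not contract pairs that are far in $d_G$ — is the crux of the argument and the step I expect to be hardest. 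Once $\Psi$ is produced with $\Delta=4$, $L\le O(1)$, and $K\le O(\gamma)$, this is exactly the asserted map, and feeding it into Theorem~\ref{thm:rounding} gives $\mcf_G(\vec\rho,\dem)\le\Phi_G(\vec\rho,\dem)\le O(\gamma)\,\mcf_G(\vec\rho,\dem)$, i.e.\ Theorem~\ref{thm:poly-flowcut}.
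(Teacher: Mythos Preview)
Your proposal misses the actual mechanism and is much more complicated than necessary. The paper does not use the tree-cut construction of Section~\ref{SEC: proof of main result} at all here; instead it builds $\Psi$ as a \emph{random mixture} of $2\gamma$ maps, each of which is individually $\leq 4$-thin with $L\leq O(1)$. Concretely, for each $i$ take the single-face map $\Psi_i$ from Theorem~\ref{thm:face-embed} applied to $F_i$, and the line map $\Psi'_i(v)=d_{G,\ell}(v,F_i)$ (which is $2$-thin by Fact~\ref{fact:linemap}); then let $\Psi$ choose one of $\Psi_1,\dots,\Psi_\gamma,\Psi'_1,\dots,\Psi'_\gamma$ uniformly at random. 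Thinness $\leq 4$ and $L\leq O(1)$ are inherited pointwise from each branch of the mixture, so they hold automatically with no gluing, no skeleton, and no worry about high-degree vertices of $P$. The factor $\gamma$ enters only in $K$, because the ``good'' map for a given pair $u,v\in D$ is chosen with probability $1/(2\gamma)$: if $u\in F_i$, either $v$ is far from $F_i$ (and $\Psi'_i$ separates them), or $v$ is close to some $u'\in F_i$ (and $\Psi_i$ separates $u$ from $u'$, hence from $v$).

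Your first attempt correctly diagnoses its own failure. Your second attempt---deterministically gluing the $T_i$'s along a skeleton built from $P$---is not just unfinished but faces a real obstacle you do not address: a connected component of $G\setminus V(P)$ need not be ``near'' a single face $F_i$ (it can contain terminals from several faces), so the rule ``use $\Psi_i$ on components attached near $F_i$'' is not well defined, and any vertex of $P$ where several components meet will again see several $T_i$'s, reproducing the thinness blow-up you were trying to avoid. The paper sidesteps all of this by never combining the maps deterministically: the randomization over $i$ is the entire trick.
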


\begin{proof}
   For each $i=1,2,\ldots,\gamma$, let $\Psi_i : V \to V(T_i)$ be the random $4$-thin mapping
   guaranteed by Theorem~\ref{thm:face-embed} with constants $1 \leq K_0,L_0 \leq O(1)$,
   and let $\Psi'_i : V \to \R$ be the $2$-thin mapping given by $\Psi'_i(v) = d_{G,\len}(v, F_i)$
   (recall Fact~\ref{fact:linemap}).
   Now let $\Psi : V \to V(T)$ be the random map that arises from choosing one of $\{\Psi_1,\ldots,\Psi_{\gamma},\Psi'_1,\ldots,\Psi'_{\gamma}\}$
   uniformly at random.
   Then $\Psi$ is a random $4$-thin mapping satisfying (1) in Theorem~\ref{thm:rounding} for some $L \leq O(1)$.

   \medskip

   Consider now some $u \in F_i$ and $v \in V$.
   Let $u' \in F_i$ be such that $d_{G,\len}(v,u')=d_{G,\len}(v,F_i)$.
   If $d_{G,\len}(u',v) \geq \frac{d_{G,\len}(u,v)}{4K_0 L_0}$, then
   \ifprocs
       \begin{align*}
       \E\left[d_T(\Psi(u),\Psi(v))\right] &\geq \frac{1}{2 \gamma} \left|\Psi'_i(u) - \Psi'_i(v)\right| \\
                                           &= \frac{d_{G,\len}(u',v)}{2\gamma}\\
                                           &\geq \frac{d_{G,\len}(u,v)}{8\gamma K_0 L_0}\,.
       \end{align*}
   \else
           \[
           \E\left[d_T(\Psi(u),\Psi(v))\right] \geq \frac{1}{2 \gamma} \left|\Psi'_i(u) - \Psi'_i(v)\right| = \frac{d_{G,\len}(u',v)}{2\gamma}
           \geq \frac{d_{G,\len}(u,v)}{8\gamma K_0 L_0}\,.
           \]
   \fi
   If, on the other hand, $d_{G,\len}(u',v) < \frac{d_{G,\len}(u,v)}{4 K_0 L_0}$, then
   \ifprocs
       \begin{align*}
          &\E\left[d_T(\Psi(u),\Psi(v))\right]\geq \frac{1}{2 \gamma} \E\left[d_{T_i}(\Psi_i(u), \Psi_i(v))\right]\geq  \\
                &\qquad \geq \frac{1}{2 \gamma} \E\left[d_{T_i}(\Psi_i(u), \Psi_i(u'))-d_{T_i}(\Psi_i(u'),\Psi_i(v))\right]  \\
                &\qquad \geq \frac{1}{2 \gamma} \left(\frac{d_{G,\len}(u, u')}{K_0} -L_0 \,d_{G,\len}(u',v)\right)  \\
                &\qquad \geq \frac{1}{2\gamma} \left(\frac{d_{G,\len}(u,v)-d_{G,\len}(u',v)}{K_0} - \frac{d_{G,\len}(u,v)}{4K_0}\right) \\
                &\qquad \geq \frac{1}{2\gamma} \left(\frac34 \frac{d_{G,\len}(u,v)}{K_0} - \frac{d_{G,\len}(u',v)}{K_0}\right) \\
                &\qquad \geq \frac{d_{G,\len}(u,v)}{4\gamma K_0}\,.
       \end{align*}
   \else
       \begin{align*}
          \E\left[d_T(\Psi(u),\Psi(v))\right] &\geq \frac{1}{2 \gamma} \E\left[d_{T_i}(\Psi_i(u), \Psi_i(v))\right]  \\
                                              &\geq \frac{1}{2 \gamma} \E\left[d_{T_i}(\Psi_i(u), \Psi_i(u'))-d_{T_i}(\Psi_i(u'),\Psi_i(v))\right]  \\
                                              &\geq \frac{1}{2 \gamma} \left(\frac{d_{G,\len}(u, u')}{K_0} -L_0 \,d_{G,\len}(u',v)\right)  \\
                                              &\geq \frac{1}{2\gamma} \left(\frac{d_{G,\len}(u,v)-d_{G,\len}(u',v)}{K_0} - \frac{d_{G,\len}(u,v)}{4K_0}\right) \\
                                              &\geq \frac{1}{2\gamma} \left(\frac34 \frac{d_{G,\len}(u,v)}{K_0} - \frac{d_{G,\len}(u',v)}{K_0}\right) \\
                                              &\geq \frac{d_{G,\len}(u,v)}{4\gamma K_0}\,.
       \end{align*}
   \fi
   Thus $\Psi$ also satisfies (2) in Theorem~\ref{thm:rounding} with $K \leq O(\gamma)$, completing the proof.
\end{proof}

\subsubsection*{Acknowledgments}
We thank Anupam Gupta for useful discussions at an early stage of the research.


\newcommand{\etalchar}[1]{$^{#1}$}

\end{document}
This is never printed